\Crefname{algocf}{Algorithm}{Algorithms}
\crefname{algocfline}{line}{lines}
\Crefname{ALC@unique}{line}{lines}
\Crefname{invariant}{Invariant}{Invariants}
\Crefname{claim}{Claim}{Claims}
\Crefname{subclaim}{Subclaim}{Subclaims}
\definecolor{DarkGray}{rgb}{0.66, 0.66, 0.66}
\definecolor{DarkPowderBlue}{rgb}{0.0, 0.2, 0.6}
\definecolor{fluorescentyellow}{rgb}{0.8, 1.0, 0.0}
\newcounter{note}[section]
\newcommand{\initOneLiners}{%
    \setlength{\itemsep}{0pt}
    \setlength{\parsep }{0pt}
    \setlength{\topsep }{0pt}
}
\def\\{}%
\def\texttt#1{<#1>}%
\def\textsf#1{<#1>}%
\def\mathsf#1{<#1>}%
\def\ensuremath#1{#1}%
\def\Cref#1{<Label:#1>}%
\def\eqref#1{<Eq.:#1>}%
\newtheorem{theorem}{Theorem}[section]
\newtheorem{lemma}[theorem]{Lemma}
\theoremstyle{definition}
\newtheorem{definition}[theorem]{Definition}
\theoremstyle{remark}
\theoremstyle{question}
\renewcommand{\theinvariant}{(I\@arabic\c@invariant)}
\newcommand{\floor}[1]{\left\lfloor #1 \right\rfloor}
\newcommand{\mailto}[1]{\href{mailto:#1}{\texttt{#1}}}
\newcommand{\NN}{\mathbb N}
\newcommand{\ZZ}{\mathbb Z}
\newcommand{\abs}[1]{\left\lvert #1 \right\rvert}
\newcommand{\SC}{\mathcal C}
\newcommand{\eat}[1]{}
\newcommand{\tst}{t^\star}
\newcommand{\nst}{n^\star}
\newcommand{\NC}{\mathrm{NC}}
\newcommand{\clos}[2]{\langle #1 \rangle_{#2}}
\newcommand{\nclos}[3]{\langle #1 \rangle_{#2, #3}}
\begin{document}

    \title{Characterizing the Effect of Noise in Language Generation in the Limit}

    \author{
        Aaron Li\thanks{Harvard University. Email: \mailto{aaronli@college.harvard.edu}.} \and
        Ian Zhang\thanks{Duke University. Email: \mailto{ian.zhang@duke.edu}.}
    }

    \date{\today}
    \maketitle

    \begin{abstract}
        Kleinberg and Mullainathan recently proposed a formal framework for studying
        the phenomenon of language generation, called \emph{language generation in the limit}.
        In this model, an adversary gives an enumeration of example strings from an unknown target language,
        and the algorithm is tasked with correctly generating
        unseen strings from the target language within finite time.
        Refined notions of non-uniform and uniform generation were later introduced by
        Li, Raman, and Tewari (2025), and a noisy model was introduced by Raman and Raman (2025),
        which allows the adversary to insert extraneous strings.
        A natural question in the noisy model is to quantify the effect of noise,
        by studying the impact of each additional extraneous string.
        We show two complementary results in this setting.
        We first show that for both uniform and non-uniform generation, a single noisy string strictly reduces the set
        of collections that can be generated, thus answering an open question in Raman and Raman (2025).
        Then, we show for both uniform and non-uniform generation that generation with a single noisy string is
        equivalent to generation with any finite amount of noise,
        sharply contrasting with the strict hierarchy for
        noisy generation in the limit shown by Bai, Panigrahi, and Zhang (2026).
        Finally, we leverage our previous results to provide the first known characterization for
        non-uniform noise-dependent generatability.
    \end{abstract}

    \pagenumbering{gobble}
    \clearpage

    \pagenumbering{arabic}

    \section{Introduction}\label{sec:intro}Understanding the powers and limitations of Large Language Models (LLMs) is a fundamental problem in machine learning.
A wide range of recent work has explored various aspects
of this technology, including mitigating hallucinations~\cite{KV24,KNVZ25},
and uncovering internal world-models~\cite{VCRKM24,LHBVPW23}.
Another important line of research has focused on understanding the impact of noisy training data
on the quality of LLM outputs.
Many empirical studies have shown the degradation of model performance under noisy labels,
and have proposed various strategies to increase robustness to noise~\cite{HYYNXHTS18,RZYU18,LSH20}.
Within statistical learning theory, there has also been great interest in understanding
the impact of noise on learning~\cite{AL88,KL93,NDRT13}.

In this paper, we study the problem of learning with noise
through the lens of the language generation framework proposed by Kleinberg and Mullainathan~\cite{KM24}.
In this setting of \emph{language generation in the limit},
there is a public collection $\SC$ of infinite languages defined over a countable universe $U$,
and an infinite game is played between an adversary and an algorithm.
The adversary privately selects a target language $K \in \SC$ and reveals the
elements of $K$ in an infinite enumeration, where a string $x_t \in K$ is revealed at each time $t$.
The algorithm then observes this enumeration and responds with a string $z_t$ at each time step,
with the goal that after some \emph{finite} time, every output string is a correct unseen element of $K$.
In this model, we can imagine the adversary as providing the training data,
and the algorithm as being the LLM which seeks to eventually generate new correct strings.
The related problem where the algorithm instead attempts to identify $K$ rather than
generate from it was studied by Gold~\cite{Gol67} and Angluin~\cite{Ang80,Ang80a}.
In this line of work, they showed that the identification problem is challenging,
with identification being impossible for many collections of languages.
In contrast,~\cite{KM24} surprisingly showed that the generation problem is much more tractable,
and in fact every \emph{countable} collection $\SC$ can be generated in the limit.

Building on the Kleinberg and Mullainathan model, there has been a large amount of work
studying various natural variants of language generation.
Li, Raman, and Tewari~\cite{LRT25} introduced notions of uniform and non-uniform generation
which quantify the time $\tst$ after which the algorithm is required to generate unseen strings from
the target language.
In the original definition of generation in the limit, the time $\tst$ can depend
on the specific target language and enumeration chosen by the adversary.
In contrast, \emph{uniform generation} requires the time $\tst$ to be independent of both the
target language and enumeration, while \emph{non-uniform generation} allows $\tst$ to depend
on the choice of the target language but not the enumeration.
\cite{LRT25} gave precise structural characterizations for collections that are uniformly
and non-uniformly generatable.
In particular, \cite{LRT25} and Charikar and Pabbaraju~\cite{CP25} independently showed that all countable
collections are non-uniformly generatable.

Another line of work has focused on strengthening the power of either the adversary or the algorithm.
In our paper, we focus on the \emph{noisy} model introduced by Raman and Raman~\cite{RR25}
which allows the adversary to insert extraneous strings from outside the target language
into its enumeration.
In \emph{noisy generation in the limit}, the adversary is allowed to pick a finite noise level
$\nst \in \NN$ and insert $\nst$ strings from outside the target language to create
a noisy enumeration.
The algorithm is then given the noisy enumeration (without being told which strings are noise)
and must generate in the same fashion as before.
\cite{RR25} also defined analogous notions of uniform and non-uniform generation
in the noisy setting, with two sets of classes based on
whether the time $\tst$ depends on the noise level $\nst$ picked by the adversary.
In uniform and non-uniform \emph{noise-independent} generation,
the time $\tst$ must be independent of the noise level $\nst$, in addition to the
standard requirements for uniform and non-uniform generation.
Meanwhile, in the \emph{noise-dependent} settings, the time $\tst$
can depend on the specific noise level $\nst$, i.e.,
$\tst$ can depend on $\nst$ in uniform noise-dependent generation,
and depend on $\nst$ and the target language in non-uniform noise-dependent generation.
Bai, Panigrahi, and Zhang~\cite{BPZ26} showed that a collection can be noise-\emph{independently} uniformly or
non-uniformly generated if and only if the collection can be uniformly or non-uniformly generated
when the adversary does not provide any example strings---intuitively
this is because the noise-independent property requires the
algorithm to generate correctly even when all of the adversary strings are noisy.
In contrast, uniform and non-uniform noise-dependent generation is a much more tractable and interesting setting.
\cite{RR25} showed that all \emph{countable} collections are non-uniformly noise-dependently generatable,
but left finding a complete characterization for (uncountable) collections that are non-uniformly noise-dependently
generatable as an open question.

A more fine-grained notion of noisy generation was introduced by~\cite{BPZ26},
where an algorithm is said to \emph{generate in the limit with noise level $i$} if it can generate
correctly when the adversary inserts at most $i$ extraneous strings.
They showed a strict hierarchy, where for every noise level $i \ge 0$,
there exists a collection $\SC_i$ which is generatable in the limit with noise level $i$,
but not with noise level $i+1$.

In this work, we explore an analogous fine-grained notion for uniform and non-uniform generation.
We say that a collection $\SC$ can be uniformly (resp. non-uniformly) generated with noise level $i$,
if there exists an algorithm which uniformly (resp. non-uniformly) generates when the adversary
inserts at most $i$ noisy strings.
This is a refinement of the noise-\emph{dependent} generatability setting introduced in~\cite{RR25}.
Our first result shows a separation between noiseless generation and generation with noise.
For both uniform and non-uniform generation, there are collections that can be generated
without noise, but cannot be generated with noise level $1$.
This implies a negative answer to an open question in~\cite{RR25}, which asked if
non-uniform generation is equivalent to non-uniform noise-dependent generation.
The separation shown here suggests that noisy labels
fundamentally affect the generation ability of language models.

We then show a surprising equivalence---for both uniform and non-uniform generation,
a collection $\SC$ is generatable with noise level $i$ for some $i \ge 1$ if and only if $\SC$
is generatable with noise level $1$.
This result strongly contrasts with the infinite hierarchy for noisy generation in the limit
shown by~\cite{BPZ26},
and also with the separation we show between noise level $0$ and $1$ for noisy uniform and non-uniform generation.
Finally, we leverage this equivalence to further show that uniform (resp. non-uniform) generation with noise level $1$
is equivalent to uniform (resp. non-uniform) noise-dependent generation,
and provide the first known characterization for classes that are non-uniformly noise-dependently generatable.

\subsection{Other Related Work}

There has been a plethora of recent work within the language generation in the limit framework.
As discussed above, the most relevant line of research in this framework is the work
studying noise in language generation, which was initiated by~\cite{RR25} and extended by \cite{BPZ26}.
In contrast to our work which studies finite amounts of noise, Mehrotra, Velegkas, Yu, and Zhou~\cite{MVYZ26}
investigate the setting of infinite noise, where they show that language generation in the limit is achievable
for all countable collections if and only if the fraction of noisy strings converges to zero.

Another line of research has focused on the tradeoff between correctness and breadth of generation.
Various notions of breadth have been formalized, which study the ``fraction'' of strings
in the target language that the algorithm eventually generates~\cite{CP25,KMV25,KW25,KW26,MVYZ26,KMV26,KW26a}.
A related work by~\cite{PRR25} investigates ``representative'' generation, where the target
language is divided into groups, and the algorithm is required to generate from all groups.
Other work has investigated whether generatability is closed under finite unions~\cite{HKMV25,BPZ26}.
Within non-uniform generation,~\cite{CP26} define a notion of pareto-optimality of generation times,
and~\cite{ABCK26} explore bounds on uniform generation times for specific classes of collections.
\cite{KPR26} introduce the setting of mistake-bounded generation where the metric of an algorithm's
success is the number of mistakes instead of the time of the last mistake.

\cite{AAK26} formalized a notion of safe generation where the algorithm must avoid
certain unsafe strings in the target language.
\cite{HP26} study agnostic generation, where the adversary's outputs
are not required to be drawn from a particular target language.
\cite{LRT26} study a model where the universe that strings are drawn from forms a metric space.
\cite{RVS26} define a new model with a replay adversary that is allowed to
insert the algorithm's own past outputs into the enumeration.
\cite{MVYZ26a,LHJG26} study language generation under the constraint of differential privacy.
\cite{LHJG26a} study contrastive generation where the algorithm is
presented with contrastive pairs of positive and negative examples.
Finally, \cite{CPT26} study list identification where the algorithm
produces a list of $k$ language guesses at each time.

    \section{Preliminaries}\label{sec:prelim}In this section, we first introduce the Kleinberg-Mullainathan model~\cite{KM24},
and then formally state our results.
A language $L$ is an infinite subset of a countably infinite set $U$ called the universe, and a collection $\SC$
is a (possibly uncountable) set of languages.
Unless specified otherwise, we will assume without loss of generality
that all collections are over the set of integers $\ZZ$.
We denote the set of nonnegative integers by $\NN = \{0, 1, \dots\}$
and abbreviate contiguous elements of a sequence $x_i$, \dots, $x_j$ by $x_{i:j}$.

\subsection{Generation in the Limit}
In the general setup, there is a fixed collection $\SC$ and a target language $K \in \SC$
selected by the adversary.
The adversary then presents the strings of $K$ in an enumeration $x_0$, $x_1$, \dots,
where each $x_t$ is contained in $K$, and for every $z \in K$, there exists some $t$ where $z = x_t$.
In addition, we require that every string in the enumeration is unique.
We note that in some previous work, the adversary was allowed to repeat strings in its enumeration.
However, it was shown by~\cite{BPZ26} that generation is equivalent when the adversary
is not allowed to repeat strings.

At each time step $t$, the algorithm takes as input the ordered list of strings enumerated by the adversary so far
and outputs a new string $z_t$.
The goal is that after some finite time $\tst$, all strings
$z_t$ for $t \ge \tst$ are correct \emph{unseen} strings from the target language $K$.
Note that unlike the adversary, the algorithm is allowed to output the same string multiple times,
but the algorithm's string is only considered correct if it is distinct from all the example strings
given by the adversary so far.

For any enumeration $x$, we will use $S(x)_t = \{x_0, x_1, \dots, x_t\}$ to denote
the set of strings enumerated up until time $t$.
When the enumeration is clear from context, we will simply write $S_t$.

\begin{definition}[Generator algorithm]
    A generator algorithm is a function $U^* \to U$ which takes as input a finite ordered
    list of strings $x_0$, \dots, $x_t$, and outputs a string $z_t$.
\end{definition}

\begin{definition}[Generation in the limit~\cite{KM24}]
    An algorithm $G$ generates in the limit for a collection $\SC$
    if for any $K \in \SC$ and any enumeration $x$ of $K$,
    there exists a time $\tst$ such that for all $t \ge \tst$,
    the generated string $z_t$ at time $t$ is in $K \setminus S_t$.
\end{definition}

In the above definition, the time $\tst$ at which the algorithm must generate correctly
can be a function of both the target language $K$ and the adversary's enumeration $x$ of $K$.
A stricter requirement would be that $\tst$ is independent of the
enumeration or even the target language.
These notions were formalized by~\cite{LRT25} to define uniform and non-uniform generation.

\begin{definition}[Uniform generation~\cite{LRT25}]
    An algorithm $G$ uniformly generates for a collection $\SC$
    if there exists a time step $\tst$ such that for any $K \in \SC$ and
    any enumeration $x$ of $K$,
    the generated string $z_t$ for every time $t \ge \tst$ is in $K \setminus S_t$.
\end{definition}

\begin{definition}[Non-uniform generation~\cite{LRT25}]
    An algorithm $G$ non-uniformly generates for a collection $\SC$
    if for any $K \in \SC$, there exists a time step $\tst$
    such that for every enumeration $x$ of $K$ and every time $t \ge \tst$,
    the generated string $z_t$ is in $K \setminus S_t$.
\end{definition}

\subsection{Noisy Generation}
Raman and Raman~\cite{RR25} introduced a model of noisy generation where there may be
a finite number of extraneous strings in the adversary's enumeration.
As before, the adversary selects a language $K \in \SC$
and an enumeration $y_0$, $y_1$, \dots\ of $K$.
However the adversary is now allowed to choose a finite noise level $\nst \in \NN$,
and insert at most $\nst$ \emph{unique} strings not belonging to $K$
into the enumeration $y_0$, $y_1$, \dots.
The resulting \emph{noisy enumeration} $x_0$, $x_1$, \dots\ is then presented to the algorithm $G$,
which must eventually generate correct unseen strings from the target language $K$.

\begin{definition}[Enumeration with noise level $i$]
    For any infinite language $K$ and integer $i \in \NN$, an enumeration
    of $K$ with noise level $i$ is any infinite sequence $x_0$, $x_1$, \dots\ without repetitions,
    such that $K \subseteq \bigcup_{j \in \NN} \{x_j\}$ and
    $\abs{\bigcup_{j \in \NN} \{x_j\} \setminus K} \le i$.
\end{definition}

We have the following definitions for generation with noise based on
how the time step $\tst$ at which we generate correctly is quantified.

\begin{definition}[Uniform noise-dependent generation~\cite{RR25}]
    An algorithm $G$ uniformly noise-dependently generates for a collection $\SC$
    if for every noise level $\nst \in \NN$, there exists a time $\tst$ such that
    for every $K \in \SC$, and every enumeration $x$ of $K$ with noise level $\nst$,
    the algorithm's output $z_t$ is in $K \setminus S_t$ for all times $t \ge \tst$.
\end{definition}

\begin{definition}[Non-uniform noise-dependent generation~\cite{RR25}]
    An algorithm $G$ non-uniformly noise-dependently generates for a collection $\SC$
    if for every noise level $\nst \in \NN$ and every $K \in \SC$,
    there exists a time $\tst$ such that for every enumeration $x$ of $K$ with noise level $\nst$,
    the algorithm's output $z_t$ is in $K \setminus S_t$ for all times $t \ge \tst$.
\end{definition}

Note that in the above definitions for uniform and non-uniform noisy generation,
the time step $\tst$ can depend on the noise level.
The related notions where $\tst$ is independent of the noise level are fully characterized by~\cite{RR25,BPZ26},
and we do not discuss them in this paper.

Building upon notions introduced in~\cite{LRT25}, the following useful definitions of consistency and closure were
introduced by~\cite{RR25}.

\begin{definition}[Consistent languages at noise level $i$~\cite{RR25}]
    Given a collection $\SC$, the set of consistent languages for a set $S$ at noise level $i$ is the set
    $\SC(S, i) \coloneq \{L \in \SC \mid \abs{S \setminus L} \le i\}$.
\end{definition}

\begin{definition}[Noisy closure~\cite{RR25}]
    Given a collection $\SC$, the noisy closure of a set $S$ at noise level $i$, denoted by $\nclos{S}{\SC}{i}$,
    is the intersection of all consistent languages for $S$ in $\SC$ at noise level $i$.
    Formally,
    \[\nclos{S}{\SC}{i} \coloneq
    \begin{cases}
        \bigcap_{L \in \SC(S, i)} L & \abs{\SC(S, i)} \ge 1, \\
        \emptyset & \abs{\SC(S, i)} = 0.
    \end{cases}\]
\end{definition}

Intuitively, the set of consistent languages for a set $S$ at noise level $i$ represents the
set of possible target languages given that the adversary has currently revealed the set $S$
in an enumeration with noise level $i$.
The noisy closure is then the intersection of those consistent languages,
representing the set of strings that the algorithm can safely generate from.
We observe the following monotonicity property of the noisy closure operator.

\begin{lemma}
    \label{lem:noisy_closure_monotone}
    Let $\SC$ be an arbitrary collection and $S$ be an arbitrary subset of the universe.
    For any noise levels $i \le j$, either $\nclos{S}{\SC}{i} = \emptyset$
    or $\nclos{S}{\SC}{j} \subseteq \nclos{S}{\SC}{i}$.
\end{lemma}
\begin{proof}
    Fix arbitrary noise levels $i \le j$.
    Any language that is consistent with $S$ at noise level $i$
    must also be consistent with $S$ at noise level $j$, so $\SC(S, i) \subseteq \SC(S, j)$.
    If $\SC(S, i) = \emptyset$, then we have $\nclos{S}{\SC}{i} = \emptyset$.
    Otherwise, since the noisy closure is the intersection of consistent languages
    at that noise level, we have $\nclos{S}{\SC}{j} \subseteq \nclos{S}{\SC}{i}$.
\end{proof}

The following simple lemma clarifies the role of the noisy closure operator---after seeing
any set of example strings $S$, an algorithm generating with noise level $i$ can safely generate from
the closure $\nclos{S}{\SC}{i}$.

\begin{lemma}
    \label{lem:noisy_closure_contain}
    Let $\SC$ be an arbitrary collection and $S$ be an arbitrary subset of the universe.
    For any noise level $i$ and language $L \in \SC(S, i)$, we have
    $\nclos{S}{\SC}{i} \subseteq L$.
\end{lemma}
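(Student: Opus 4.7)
The plan is to unwind the definitions of $\SC(S,i)$ and $\nclos{S}{\SC}{i}$, observe that the hypothesis places $L$ into the index set of the intersection defining the closure, and conclude by the trivial fact that an intersection is contained in each of its members.

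More concretely, the first step is to note that the hypothesis $L \in \SC(S,i)$ immediately gives $\abs{\SC(S,i)} \ge 1$, so we are in the first (nontrivial) branch of the piecewise definition of $\nclos{S}{\SC}{i}$. In particular,
\[
\nclos{S}{\SC}{i} \;=\; \bigcap_{L' \in \SC(S,i)} L'.
\]
The second step is then the one-line observation that since $L$ itself appears as one of the index terms $L'$ in this intersection, we have $\bigcap_{L' \in \SC(S,i)} L' \subseteq L$, which yields $\nclos{S}{\SC}{i} \subseteq L$ as desired.

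There is essentially no obstacle here: the lemma is a direct consequence of how noisy closure is defined, and its purpose is purely bookkeeping, giving a named reference for the fact that the closure is contained in every consistent language. The only thing to be careful about is handling the empty-closure branch of the definition, but the hypothesis $L \in \SC(S,i)$ rules that case out from the start, so no further argument is required.
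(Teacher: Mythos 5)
Your proof is correct and follows exactly the paper's argument: unwind the definition of the noisy closure as an intersection over $\SC(S,i)$ and note that the intersection is contained in each member, with the hypothesis $L \in \SC(S,i)$ ruling out the empty branch. No differences worth noting.
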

\begin{proof}
    By definition, $\nclos{S}{\SC}{i} = \bigcap_{L \in \SC(S, i)}L$,
    so $\nclos{S}{\SC}{i} \subseteq L$ for any $L \in \SC(S, i)$.
\end{proof}

Based on the noisy closure operator, \cite{RR25} define the \emph{noisy closure dimension}
which will be used in our characterizations of generatability with noise.

\begin{definition}[Noisy closure dimension~\cite{RR25}]
    The noisy closure dimension of a collection $\SC$ at noise level $i$,
    denoted $\NC_i(\SC)$, is the size of the largest finite
    set $S$ such that $\SC(S, i) \neq \emptyset$ and $\abs{\clos{S}{\SC, i}} < \infty$.
    If there are arbitrarily large sets $S$ where $\SC(S, i) \neq \emptyset$ and $\abs{\clos{S}{\SC, i}} < \infty$,
    the closure dimension is $\infty$.
\end{definition}

Similar to \cref{lem:noisy_closure_monotone}, we observe the following monotonicity property
for the noisy closure dimension.

\begin{lemma}
    \label{lem:ncd_monotone}
    For any collection $\SC$ and noise levels $i \le j$, we have $\NC_i(\SC) \le \NC_j(\SC)$.
\end{lemma}
\begin{proof}
    Fix arbitrary noise levels $i \le j$.
    We know from \cref{lem:noisy_closure_monotone} that for any set $S$
    with $\SC(S, i) \neq \emptyset$, we have $\abs{\nclos{S}{\SC}{i}} \ge \abs{\nclos{S}{\SC}{j}}$.
    Thus, any set $S$ that satisfies $\SC(S, i) \neq \emptyset$ and $\abs{\clos{S}{\SC, i}} < \infty$
    also satisfies $\SC(S, j) \neq \emptyset$ and $\abs{\clos{S}{\SC, j}} < \infty$.
    This implies $\NC_i(\SC) \le \NC_j(\SC)$.
\end{proof}

We now define our new notions of noisy uniform and non-uniform generation
at particular finite noise levels.
Our definitions build upon the fine-grained notions of
noisy generation in the limit introduced by~\cite{BPZ26}.

\begin{definition}[Uniform generation with noise level $i$]
    For any $i \in \NN$, an algorithm $G$ uniformly generates with noise level $i$
    for a collection $\SC$ if there exists a time $\tst$ such that for every $K \in \SC$,
    every enumeration $x$ of $K$ with noise level at most $i$, and all $t \ge \tst$,
    the string generated by the algorithm at time $t$ belongs to $K \setminus S_t$.
\end{definition}

\begin{definition}[Non-uniform generation with noise level $i$]
    For any $i \in \NN$, an algorithm $G$ non-uniformly generates with noise level $i$
    for a collection $\SC$ if for every $K \in \SC$, there exists a time $\tst$ such that for
    every enumeration $x$ of $K$ with noise level at most $i$ and all $t \ge \tst$,
    the string generated by the algorithm at time $t$ belongs to $K \setminus S_t$.
\end{definition}


\subsection{Our Results}
Our results first focus on understanding the power of uniform and non-uniform generation
with finite noise.
We then extend our exploration to noise-dependent generation,
culminating in a complete landscape for uniform and non-uniform noise-dependent generation.
We begin by showing an equivalence in generation between different finite noise levels.

\begin{theorem}
    \label{thm:fin_equiv}
    For any finite noise level $i \ge 1$, a collection $\SC$ is uniformly generatable with noise level $i$
    if and only if $\SC$ is uniformly generatable with noise level $1$.
    Similarly, a collection $\SC$ is non-uniformly generatable with noise level $i$
    if and only if $\SC$ is non-uniformly generatable with noise level $1$.
\end{theorem}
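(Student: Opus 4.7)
The easier direction is immediate: every noise-$1$ enumeration is also a noise-$i$ enumeration for $i \ge 1$, so any algorithm that uniformly (resp.\ non-uniformly) generates at noise $i$ automatically handles the restricted class of noise-$1$ enumerations. I focus on the ``if'' direction.

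The plan is to route through a structural characterization: uniform generation with noise level $i$ is equivalent to $\NC_i(\SC) < \infty$. Sufficiency uses the \emph{closure algorithm} that outputs any element of $\nclos{S_t}{\SC}{i} \setminus S_t$ at each time $t$; since $K \in \SC(S_t, i)$ the closure family is non-empty, for $|S_t| > \NC_i(\SC)$ the closure is infinite, and by \Cref{lem:noisy_closure_contain} every element of it lies in $K$. Necessity follows by an adversary that selects a large witness $S$ with $|\nclos{S}{\SC}{i}|$ finite, enlarges it to $S \cup \nclos{S}{\SC}{i}$ (which preserves the consistent family and closure while placing the closure inside the enlarged set), and presents this as a prefix---the algorithm must then output an unseen element, which necessarily lies outside the closure and hence outside some $L \in \SC(S, i)$ that the adversary commits to as $K$. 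A pointwise variant of this characterization handles non-uniform generation.

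The theorem now reduces to $\NC_1(\SC) < \infty \iff \NC_i(\SC) < \infty$. The direction $\NC_i(\SC) < \infty \implies \NC_1(\SC) < \infty$ follows from the inequality $\NC_1(\SC) \leq \NC_i(\SC)$, which holds because $\SC(S, 1) \subseteq \SC(S, i)$ forces $\nclos{S}{\SC}{i} \subseteq \nclos{S}{\SC}{1}$, making every $\NC_1$-witness also an $\NC_i$-witness. For the converse $\NC_1(\SC) < \infty \implies \NC_i(\SC) < \infty$, I induct on $i$: given $\NC_{i-1}(\SC) < \infty$, let $S$ be a large witness for $\NC_i$ via some $L_0 \in \SC(S, i)$ with $|S \setminus L_0| = i$ (the case $|S \setminus L_0| < i$ reduces directly to the induction hypothesis, since then $L_0 \in \SC(S, i-1)$). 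For each $x \in S \setminus L_0$, setting $S' = S \setminus \{x\}$ gives $L_0 \in \SC(S', i-1)$, and the decomposition $\SC(S', i-1) = \SC(S, i-1) \cup \{L \in \SC(S, i) \colon x \notin L\}$ yields $\nclos{S'}{\SC}{i-1} = \nclos{S}{\SC}{i-1} \cap \bigcap_{L \in \SC(S, i),\, x \notin L} L$. By the induction hypothesis this intersection is infinite once $|S'| > \NC_{i-1}(\SC)$; I then combine this information across all $i$ elements of $S \setminus L_0$ with the finiteness of $\nclos{S}{\SC}{i}$ to derive a contradiction, bounding $|S|$.

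The main obstacle is this final combinatorial combination: since $\bigcap_{L \colon x \notin L} L$ can be strictly larger than $\nclos{S}{\SC}{i}$, no single choice of $x$ yields the contradiction. I expect the argument to proceed via a pigeonhole bound on how many languages in $\SC(S, i) \setminus \SC(S, i-1)$ can exclude each $x$: each such language has a size-$i$ exclusion pattern, so an averaging argument controls the incidence structure and should produce an $x$ for which the second factor matches $\nclos{S}{\SC}{i}$ up to a finite additive correction, giving the desired bound on $|S|$ and hence on $\NC_i(\SC)$.
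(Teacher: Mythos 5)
Your reduction to the closure dimension is exactly the paper's route: the characterization ``uniformly generatable with noise level $i$ iff $\NC_i(\SC) < \infty$'' (and its countable-union analogue for non-uniform generation) matches the paper, and the easy direction is fine. But the entire theorem then hinges on the implication $\NC_{i-1}(\SC) < \infty \implies \NC_i(\SC) < \infty$, and this is precisely the step you do not prove: you end with ``I expect the argument to proceed via a pigeonhole bound \dots should produce an $x$ \dots,'' which is a conjecture, not an argument. Moreover, the route you sketch has a real obstruction beyond the one you name. Writing $\nclos{S}{\SC}{i} = \nclos{S}{\SC}{i-1} \cap \bigcap_{L} L$, where $L$ ranges over languages in $\SC(S,i)$ with $\abs{S \setminus L} = i$, your single-deletion decomposition only ever sees those $L$ whose excluded set meets $S \setminus L_0$; a language may exclude $i$ elements of $S$ lying entirely inside $L_0$, and then no choice of $x \in S \setminus L_0$ captures it. So the $i$ infinite sets you obtain need not intersect down to anything related to $\nclos{S}{\SC}{i}$ (and finitely many infinite sets can in any case have finite intersection), so no contradiction with the finiteness of $\nclos{S}{\SC}{i}$ follows. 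A smaller slip: the case $\abs{S \setminus L_0} < i$ does not ``reduce directly to the induction hypothesis'' just because $L_0 \in \SC(S, i-1)$; you would need \emph{every} $L \in \SC(S,i)$ to satisfy $\abs{S \setminus L} \le i-1$, so that $\SC(S,i) = \SC(S,i-1)$ and the closures coincide --- one consistent language at level $i-1$ does not make $\nclos{S}{\SC}{i-1}$ finite.

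The paper closes this gap with a different combinatorial idea, proving the quantitative bound $\NC_{i-1}(\SC) \ge \floor{\sqrt{\NC_i(\SC)}}$. Take a witness $S$ of size $k^2$ with finite level-$i$ closure and partition it into $k$ disjoint blocks $S_1, \dots, S_k$ of size $k$. If some block already has finite level-$(i-1)$ closure, done. Otherwise pick distinct representatives $x_j \in \nclos{S_j}{\SC}{i-1}$ and set $A = \{x_1, \dots, x_k\}$. Any $L \in \SC(S,i)$ can fail level-$(i-1)$ consistency on at most one block, since two disjoint failing blocks would force $\abs{S \setminus L} \ge 2i > i$; hence $L$ misses at most one $x_j$, so $\SC(S,i) \subseteq \SC(A,1) \subseteq \SC(A,i-1)$ and therefore $\nclos{A}{\SC}{i-1} \subseteq \nclos{S}{\SC}{i}$ is finite, giving $\NC_{i-1}(\SC) \ge k$. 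Some such global use of disjointness (spreading the noise budget across many disjoint sub-witnesses) is what your element-by-element deletion scheme is missing, and without it the inductive step --- and hence the theorem --- is not established.
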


The above equivalence is perhaps surprising, since it contrasts with the
strict hierarchy for noisy generation in the limit shown by~\cite{BPZ26}---for each
noise level $i \ge 0$ there exists a collection that can be noisily generated in the limit
with noise level $i$, but not noise level $i+1$.
Meanwhile, our result shows that for uniform and non-uniform noisy generation,
the hierarchy collapses for all noise levels $i \ge 1$.

Next we show that for both uniform and non-uniform generation,
there is a separation between noiseless generation and generation with noise level $1$.

\begin{restatable}{theorem}{nonuniformdiff}
    \label{thm:nonuniform_diff}
    There exists a collection that is uniformly generatable (and hence non-uniformly generatable) without noise,
    but is not non-uniformly generatable (and hence not uniformly generatable) with noise level $1$.
\end{restatable}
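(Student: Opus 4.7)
The plan is to exhibit an uncountable collection $\SC$ over $U = \NN$ that is uniformly generatable but not non-uniformly generatable with noise level~$1$. A preliminary observation is that the separator must be uncountable: by~\cite{RR25} every countable collection is non-uniformly noise-dependently generatable, and by~\Cref{thm:fin_equiv} this property coincides with non-uniform generatability with noise level~$1$, so the separation lives entirely outside the countable regime.

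Concretely, I would construct $\SC = \{K\} \cup \{M_\alpha : \alpha \in I\}$ for an uncountable index set $I$, a distinguished infinite base language $K$, and shadowing languages $M_\alpha$ designed to satisfy two competing properties. First, every finite $S$ of size larger than some absolute constant $d$ satisfies either $\SC(S,0) = \emptyset$ or $\bigcap_{L \in \SC(S,0)} L$ infinite; applying the closure-dimension characterization of uniform generation from~\cite{LRT25} then certifies that $\SC$ is uniformly generatable. Second, for every finite $S \subseteq K$ and every $y \in K \setminus S$, there is some $\alpha$ with $\abs{S \setminus M_\alpha} \le 1$, $y \notin M_\alpha$, and $K \setminus M_\alpha$ infinite. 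This second property is the noise-$1$ analogue of an infinite non-uniform ladder: for any prefix $S$ of an enumeration of $K$ and any candidate output $y \in K$, there is an alternative target $M_\alpha$ against which the output is incorrect and which is consistent with $S$ modulo a single noise string.

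To rule out non-uniform generation with noise level~$1$, I would argue by contradiction: assume $G$ non-uniformly generates $\SC$ with noise level~$1$, so that for each $L \in \SC$ there is a finite $\tst(L)$ after which $G$ is correct on every noise-$1$ enumeration of $L$. The ladder-like shadowing property lets me diagonalize against $G$ with target $K$. The adversary inductively produces an enumeration of $K$ with a single noise token, and at infinitely many steps invokes the shadowing property to produce an alternative target $M_\alpha$ that is also consistent with the current prefix at noise~$1$; the non-uniform guarantee applied to $M_\alpha$ forces $G$'s output at that step to lie in $M_\alpha$ for all sufficiently late prefixes of the hypothetical $M_\alpha$-enumeration, and by choosing $M_\alpha$ to exclude $G$'s output the adversary drives $G$ into an error on the actual target $K$. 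Iterating, one exhibits a legitimate noise-$1$ enumeration of $K$ on which $G$ fails at infinitely many times, contradicting $\tst(K)$ being finite.

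The main obstacle is making the two defining properties of $\SC$ coexist: finite noiseless closure dimension pushes toward a rich common sublanguage shared across all consistent candidates, while the noise-$1$ ladder property demands flexible per-prefix shadowing languages $M_\alpha$ each differing from $K$ on an infinite set. Resolving this tension --- typically by arranging the $M_\alpha$'s so that their pairwise intersections are infinite even though each individual $K \setminus M_\alpha$ is infinite and can be targeted to exclude an arbitrary $y \in K$ --- is the key technical step of the construction.
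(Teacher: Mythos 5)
There is a genuine gap, and it lies in the diagonalization, which is the heart of the lower bound. To refute non-uniform generation with noise level $1$ you must exhibit a \emph{single fixed} language $L \in \SC$ such that for every candidate time $\tst$ there is an enumeration of $L$ with noise level at most $1$ and a time $t \ge \tst$ at which $G$'s output is not in $L \setminus S_t$. Your argument never produces such an $L$. At each step you invoke the shadowing property to get a \emph{different} $M_\alpha$ that is consistent with the current prefix and excludes $G$'s current output; but an error against $M_\alpha$ at a time earlier than $\tst(M_\alpha)$ is not a violation of anything, since the non-uniform guarantee for $M_\alpha$ only applies after its own language-dependent time, which you do not control and which may dwarf the current prefix length. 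The sentence ``the non-uniform guarantee applied to $M_\alpha$ forces $G$'s output at that step to lie in $M_\alpha$'' is exactly the step that fails: nothing is forced at the current step. For the same reason you cannot ``drive $G$ into an error on the actual target $K$'': if $G$ simply keeps outputting elements of $K \setminus S_t$, it is correct for $K$ on your enumeration no matter which shadow languages exist, and it is only wrong for the various $M_\alpha$'s at times that may precede their respective $\tst(M_\alpha)$'s. Since a fresh $M_\alpha$ is chosen at every step, no single language accumulates errors at unboundedly late times, so no contradiction with non-uniform generation is reached. (A secondary issue is that the collection itself is never constructed; the coexistence of your two properties is deferred as ``the key technical step,'' so even the uniform-generation half rests on an unexhibited object.)

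What is missing is precisely the mechanism the paper uses: fix in advance an infinite family of pairwise-disjoint blocks of columns with prefixes $s'_1, s'_2, \dots$ of unbounded length, look at how $G$ behaves on \emph{all} of them, and then, by a case analysis plus a pigeonhole/greedy selection over infinitely many indices, assemble \emph{one} language $L$ (a union of infinitely many blocks, chosen as a function of $G$'s outputs) such that infinitely many of the $s'_i$ are valid noise-$\le 1$ prefixes of enumerations of $L$ while $G(s'_i) \notin L$ for all of them. That single $L$ then witnesses failures at arbitrarily late times, which is what the definition of non-uniform generation with noise level $1$ requires. Your per-prefix shadow languages would suffice to show that the noisy closure dimension $\NC_1$ is infinite, i.e.\ to rule out \emph{uniform} generation with noise level $1$, but the non-uniform statement claimed in the theorem needs the stronger, single-language construction. (Your preliminary observation that any separating collection must be uncountable is correct and consistent with the paper, but it does not repair the diagonalization.)
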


We note that separating non-uniform generation and non-uniform noise-dependent generation
was left as an open question in~\cite{RR25}.
Our result negatively answers the question in the strongest possible sense---there
is a separation between non-uniform generation and noisy non-uniform generation with just a single noisy string.

We then show that uniform noise-dependent generatability is in fact equivalent to
uniform generation with noise level $1$.
This allows us to derive a simple characterization for collections that are uniformly noise-dependently generatable
in terms of the noisy closure dimension.
The following theorem summarizes our results, which precisely characterize the landscape
of uniform noise-dependent generation.

\begin{restatable}{theorem}{uniformfull}
    For any collection $\SC$, the following are equivalent:
    \begin{enumerate}
        \item $\NC_1(\SC) < \infty$, \label{list:uniform1}
        \item $\SC$ is uniformly generatable with noise level $i$, for any $i \ge 1$, \label{list:uniform2}
        \item $\SC$ is uniformly noise-dependently generatable. \label{list:uniform3}
    \end{enumerate}
    However, there exists a collection that is uniformly generatable without noise,
    but is not uniformly generatable with noise level $1$.
\end{restatable}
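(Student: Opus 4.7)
My plan is to prove the triple equivalence via the cycle (1) $\Rightarrow$ (2) $\Rightarrow$ (3) $\Rightarrow$ (1), combining the explicit closure-based algorithm, \Cref{thm:fin_equiv}, and an adversary argument adapted to the noisy setting; the final separation claim in Part B will then follow immediately from \Cref{thm:nonuniform_diff}, which already produces a collection that is uniformly generatable without noise but not uniformly generatable with noise level $1$.

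For (1) $\Rightarrow$ (2), I will assume $\NC_1(\SC) = d < \infty$ and analyze the natural generator that at each time $t$ with $|S_t| > d$ outputs an arbitrary string from $\nclos{S_t}{\SC}{1} \setminus S_t$. Since $|S_t| > d = \NC_1(\SC)$ and the true target $K$ witnesses $\SC(S_t, 1) \neq \emptyset$, the definition of $\NC_1$ forces $|\nclos{S_t}{\SC}{1}| = \infty$, so a valid output exists. \Cref{lem:noisy_closure_contain} then places the output in $K \setminus S_t$. This yields uniform generation with noise level $1$ at time $\tst = d + 1$, and \Cref{thm:fin_equiv} promotes it to every $i \ge 1$.

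For (2) $\Rightarrow$ (3), I first observe that (2) at any single level $i$ implies $\NC_i(\SC) < \infty$ by the contrapositive argument below specialized to noise $i$; hence every $\NC_i(\SC)$ is finite. I then exhibit a single adaptive algorithm that at time $t$ sets $i(t) = \max\{i : \NC_i(\SC) < |S_t|\}$ and outputs a string from $\nclos{S_t}{\SC}{i(t)} \setminus S_t$. Because $i(t) \to \infty$, for any noise level $\nst$ the algorithm eventually satisfies $i(t) \ge \nst$, at which point the true $K$ lies in $\SC(S_t, i(t))$ and \Cref{lem:noisy_closure_contain} guarantees correctness; this gives noise-dependent generation with a single algorithm.

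For (3) $\Rightarrow$ (1), I will prove the contrapositive: assuming $\NC_1(\SC) = \infty$, I fix any candidate algorithm $G$ and uniform time $\tst$ for noise level $1$, and construct a bad enumeration (which rules out uniform generation with noise level $1$, and hence noise-dependent generation upon taking $\nst = 1$). Using $\NC_1(\SC) = \infty$, I pick a set $S$ of size $\tst + 1$ with $\SC(S, 1) \neq \emptyset$ and $|\nclos{S}{\SC}{1}| < \infty$, and have the adversary enumerate $S$ first so that $S_{\tst} = S$. If $G$'s output $z_{\tst}$ lies in $S$ or outside $\nclos{S}{\SC}{1}$, the adversary selects $K \in \SC(S, 1)$ with $z_{\tst} \notin K$ (possible by the definition of noisy closure) and completes the enumeration with $K$; otherwise $z_{\tst} \in \nclos{S}{\SC}{1} \setminus S$, and the adversary continues while adaptively maintaining consistent candidate targets, forcing $G$ to eventually output a string outside the finite ``safe zone'' (since $\nclos{S}{\SC}{1} \subseteq K$ must eventually be swallowed by $S_t$), at which point the adversary commits to a target killing that string. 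The hard part will be this adaptive extension---threading consistency of the final enumeration at noise level at most $1$ while draining the algorithm's safe zone---but the finiteness of $\nclos{S}{\SC}{1}$, combined with the fact that all of $K$ must eventually be enumerated, makes a pigeonhole/back-and-forth construction feasible.
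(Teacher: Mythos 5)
Most of your plan is sound, and two of your choices are fine or even nice: using \cref{thm:nonuniform_diff} for the separation works (that collection is uniformly generatable without noise and not even non-uniformly generatable with noise level $1$), and is more self-contained than the paper's route, which instead cites Theorem 3.5 of Raman--Raman together with the equivalence of noise-dependent generation and noise level $1$; likewise your adaptive algorithm for (2)$\Rightarrow$(3), which outputs from $\nclos{S_t}{\SC}{i(t)}$ with $i(t)$ growing as $\abs{S_t}$ exceeds successive $\NC_i(\SC)$, is essentially a re-derivation of the sufficiency half of \cref{thm:prev_uniform_char_inf} (which the paper simply cites), and it goes through once you cap $i(t)$ (e.g.\ $i(t)\le t$) so the maximum is well defined.

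The genuine gap is in your (3)$\Rightarrow$(1) step (and it also infects (2)$\Rightarrow$(3), since you invoke the same ``contrapositive argument'' there to get $\NC_i(\SC)<\infty$). You explicitly defer ``the hard part'': after the algorithm outputs $z_{\tst}\in\nclos{S}{\SC}{1}\setminus S$, you propose to drain the safe zone adaptively, justified by ``$\nclos{S}{\SC}{1}\subseteq K$ must eventually be swallowed by $S_t$.'' That reasoning is insufficient as stated: once the adversary starts revealing elements of a committed target $K$, the consistent set $\SC(S_t,1)$ shrinks, so the safe zone $\nclos{S_t}{\SC}{1}$ can \emph{grow} (it can become all of $K$), and the algorithm may then generate correctly forever; so enumerating $K$ until $\nclos{S}{\SC}{1}$ is covered does not trap $G$. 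The missing idea, which is exactly how the paper's \cref{lem:uniform_char_fin} closes this, is to reveal only the closure itself: set $S'=S\cup\nclos{S}{\SC}{1}$ and observe via \cref{lem:noisy_closure_contain} that $\SC(S,1)=\SC(S',1)$, hence $\nclos{S'}{\SC}{1}=\nclos{S}{\SC}{1}\subseteq S'$. Then at time $\abs{S'}$ (which is $\ge\tst$ since $\abs{S}\ge\tst$), any output is either already seen or lies outside $\nclos{S'}{\SC}{1}$, so some $K\in\SC(S',1)$ excludes it, and the prefix is a valid noise-level-$1$ prefix for that $K$ --- no back-and-forth or pigeonhole construction is needed, but without this stability observation your ``draining'' step does not go through.
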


Our characterization for noise-dependent generatability, that $\NC_1(\SC) < \infty$, simplifies
the existing characterization given in~\cite{RR25}, which states that a collection $\SC$
is uniformly noise-dependently generatable if and only if $\NC_i(\SC) < \infty$
for every $i \in \NN$.

Finally, we take the same approach for non-uniform noise-dependent generation and show a similar equivalence.
\begin{restatable}{theorem}{nonuniformfull}
    For any collection $\SC$, the following are equivalent:
    \begin{enumerate}
        \item there exists a countable sequence of collections $\SC_0 \subseteq \SC_1 \subseteq \dots$
        such that $\SC = \bigcup_{j \in \NN} \SC_j$
        and $\NC_1(\SC_j) < \infty$ for all $j \in \NN$. \label{list:nonuniform1}
        \item $\SC$ is non-uniformly generatable with noise level $i$, for any $i \ge 1$, \label{list:nonuniform2}
        \item $\SC$ is non-uniformly noise-dependently generatable. \label{list:nonuniform3}
    \end{enumerate}
    However, there exists a collection that is non-uniformly generatable without noise,
    but is not non-uniformly generatable with noise level $1$.
\end{restatable}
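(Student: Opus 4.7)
The plan is to reduce to the uniform characterization proven just above, using the noise-level equivalence of \Cref{thm:fin_equiv} to collapse the ``for any $i \ge 1$'' quantifier in statement~(2) down to the single case $i=1$, and to invoke \Cref{thm:nonuniform_diff} directly to obtain the final separation sentence. It then remains to establish the three-way equivalence $(1) \Leftrightarrow (2)_{i=1} \Leftrightarrow (3)$.

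The implication $(3) \Rightarrow (2)_{i=1}$ is essentially definitional: the same algorithm witnessing (3) gives (2) at noise level $1$ by taking $\tst_K = \max(\tst_{K,0},\tst_{K,1})$ over the noise levels $0$ and $1$. For $(2)_{i=1} \Rightarrow (1)$, I would partition the collection by generation time. Let $G$ non-uniformly generate $\SC$ with noise level $1$, and for each $K$ let $\tst_K$ be the minimum time that works. Set $\SC_j = \{K \in \SC : \tst_K \le j\}$; this is a nested sequence whose union is $\SC$. By construction, $G$ is itself a \emph{uniform} generator for $\SC_j$ with noise level $1$ (with uniform time $j$), so the uniform characterization proven above yields $\NC_1(\SC_j) < \infty$, establishing (1).

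The remaining implication $(1) \Rightarrow (3)$ I would handle via dovetailing. The uniform characterization gives, for each $j$, an algorithm $G_j$ that uniformly noise-dependently generates $\SC_j$ with some time $\tst_{j,\nst}$ at noise level $\nst$. Define the combined algorithm by $G(x_{0:t}) := G_{j(t)}(x_{0:t})$, where
\[
    j(t) := \max\bigl\{j \le t : \tst_{j',\nst'} \le t \text{ for all } j' \le j \text{ and all } \nst' \le j\bigr\}.
\]
This $j(t)$ is nondecreasing and tends to infinity. For any target $K \in \SC$ and any noise level $\nst$, choose $j_K$ minimal with $K \in \SC_{j_K}$; for all sufficiently large $t$ we have $j(t) \ge \max(j_K,\nst)$, so $K \in \SC_{j(t)}$ by nesting, and also $\tst_{j(t),\nst} \le t$ by the definition of $j(t)$, whence $G_{j(t)}(x_{0:t})$ is a correct unseen element of $K$.

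The main subtlety I expect is in the last step: verifying that a single combined algorithm $G$ suffices uniformly across all noise levels $\nst$ simultaneously (as demanded by the definition of non-uniform noise-dependent generation), despite the fact that $G$ switches between infinitely many sub-algorithms $G_j$ over time. This is exactly what the nesting $\SC_0 \subseteq \SC_1 \subseteq \dots$ enables: once $j(t)$ has passed both $j_K$ and $\nst$, any further switching to larger indices is harmless, since $K$ remains in every subsequent $\SC_{j(t)}$ and $G_{j(t)}$ remains a valid uniform noise-$\nst$ generator for that class.
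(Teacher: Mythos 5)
Your proposal is correct, but it routes the argument somewhat differently from the paper. The paper's proof of this theorem is essentially a citation of its earlier results: the equivalence of (1) and (2) comes from \cref{lem:nonuniform_char_fin} together with \cref{thm:nonuniform_equiv}, while the equivalence of (1) and (3) comes from \cref{thm:nonuniform_char_inf}, which is itself obtained by combining the noise-level collapse with two black-box lemmas from Raman--Raman (\cref{lem:nonunif_suff,lem:nonunif_nec}), and the separation is \cref{thm:nonuniform_diff}. Your $(2)_{i=1} \Rightarrow (1)$ step (partitioning $\SC$ by the generation time $\tst_K$ and invoking \cref{lem:uniform_char_fin}) is exactly the forward direction of \cref{lem:nonuniform_char_fin}. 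Where you diverge is in how (3) enters the equivalence: instead of citing the Raman--Raman sufficiency and necessity lemmas, you prove $(3) \Rightarrow (2)_{i=1}$ definitionally and $(1) \Rightarrow (3)$ by an explicit dovetailing construction in which the switching index $j(t)$ is driven by the two-parameter family of thresholds $\tst_{j,\nst}$ of the \emph{uniform noise-dependent} generators $G_j$ supplied by \cref{thm:uniform_char_inf}; the requirement $\nst' \le j$ in your definition of $j(t)$ is precisely what guarantees $\tst_{j(t),\nst} \le t$ once $j(t) \ge \max(j_K,\nst)$, and since $j(t)$ is data-independent, monotone, and tends to infinity, the resulting time bound depends only on $K$ and $\nst$, as required. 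This is a valid, self-contained replacement for the appeal to \cref{lem:nonunif_suff} (your dovetail generalizes the one inside \cref{lem:nonuniform_char_fin}, which only has to handle a fixed noise level); the paper's version is shorter given its earlier lemmas, while yours avoids importing the external sufficiency/necessity results. The only cosmetic omissions are defining $j(t)$ when the maximized set is empty (take $j(t)=0$, as the paper does) and noting that the $\max(\tst_{K,0},\tst_{K,1})$ in your $(3)\Rightarrow(2)$ step is unnecessary, since an enumeration with noise level $1$ already allows at most one noisy string; neither affects correctness.
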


Our result provides the first known characterization for non-uniform noise-dependent generation,
answering another open question from~\cite{RR25}.

    \section{Noisy uniform generation}\label{sec:unif}In this section, we prove the portion of our results concerning noisy uniform generation.
We first give an initial complete characterization for collections that are uniformly generatable with noise
level~$i$.
We then use this characterization to show that uniform generation with noise level $i$
is in fact equivalent for all finite $i \ge 1$.

The following lemma uses standard techniques, and is implicitly contained in Theorem~$3.3$ from~\cite{RR25}.
\begin{lemma}
    \label{lem:uniform_char_fin}
    A collection $\SC$ is uniformly generatable with noise level $i$ if and only if $\NC_i(\SC) < \infty$.
\end{lemma}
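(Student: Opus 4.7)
The plan is to prove the biconditional in two directions, mirroring the noiseless closure-dimension characterization of~\cite{LRT25}. For the easy ($\Leftarrow$) direction, assume $\NC_i(\SC) = d < \infty$ and define the generator $G$ that at every time step $t$ outputs an arbitrary unseen element of $\nclos{S_t}{\SC}{i}$, defaulting to an arbitrary string if no such element exists. I would set $\tst = d$. For $t \geq \tst$ we have $|S_t| = t + 1 \geq d + 1 > \NC_i(\SC)$, and since the true target $K$ witnesses $\SC(S_t, i) \neq \emptyset$, the definition of $\NC_i$ forces $|\nclos{S_t}{\SC}{i}| = \infty$. Hence an unseen element is always available, and by \Cref{lem:noisy_closure_contain} any such element lies in $K \setminus S_t$.

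For the converse ($\Rightarrow$), I argue by contradiction. Suppose $G$ uniformly generates with noise level $i$ in time $\tst$ but $\NC_i(\SC) = \infty$. Then I can pick a set $S$ with $|S| > \tst$, $\SC(S, i) \neq \emptyset$, and $|\nclos{S}{\SC}{i}| < \infty$. The main step in the proof is an idempotence property of the noisy closure: $\SC(S \cup \nclos{S}{\SC}{i}, i) = \SC(S, i)$, and hence $\nclos{S \cup \nclos{S}{\SC}{i}}{\SC}{i} = \nclos{S}{\SC}{i}$. The nontrivial inclusion follows because any $L \in \SC(S, i)$ contains $\nclos{S}{\SC}{i}$ by \Cref{lem:noisy_closure_contain}, so $(S \cup \nclos{S}{\SC}{i}) \setminus L = S \setminus L$ still has size at most $i$.

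With idempotence in hand, the adversary enumerates the finite set $S \cup \nclos{S}{\SC}{i}$ in the first $N := |S \cup \nclos{S}{\SC}{i}| \geq |S| > \tst$ steps, in any order. For $G$'s output $z_{N-1}$ at the valid time $N-1 \geq \tst$ to be consistent with every possible target in $\SC(S_{N-1}, i) = \SC(S, i)$, it would have to belong to $\nclos{S}{\SC}{i} \setminus (S \cup \nclos{S}{\SC}{i}) = \emptyset$. A short case analysis then picks $K^* \in \SC(S, i)$ such that $z_{N-1} \notin K^* \setminus S_{N-1}$: if $z_{N-1} \in S_{N-1}$ any $K^*$ works, and otherwise $z_{N-1} \notin \nclos{S}{\SC}{i}$, so some $K^*$ excludes $z_{N-1}$. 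The adversary then completes the enumeration by listing $K^* \setminus S_{N-1}$ in arbitrary order, producing a valid non-repeating enumeration of $K^*$ whose noisy strings total $|S \setminus K^*| \leq i$, which contradicts uniform generation.

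The main obstacle is spotting and exploiting the idempotence of the noisy closure; once identified, the remaining verifications---non-repetition of the enumeration, total noise at most $i$ (using $\nclos{S}{\SC}{i} \subseteq K^*$), and eventual coverage of $K^*$---are essentially mechanical.
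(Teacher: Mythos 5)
Your proposal is correct and follows essentially the same route as the paper: the easy direction via the closure-guided generator (using that $|S_t| > \NC_i(\SC)$ forces an infinite closure), and the converse via the idempotence claim $\SC(S \cup \nclos{S}{\SC}{i}, i) = \SC(S, i)$, enumerating $S \cup \nclos{S}{\SC}{i}$ and extracting a consistent $K^*$ excluding the algorithm's output. Your write-up is if anything slightly more explicit than the paper's (completing the enumeration of $K^*$ and checking the noise budget), but the ideas are the same.
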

\begin{proof}
    For one direction, assume that $\NC_i(\SC) < \infty$.
    We construct an algorithm $G$ where all outputs after time $\NC_i(\SC)$ are correct.
    At any time $t$, if $\nclos{S_t}{\SC}{i} \setminus S_t \neq \emptyset$,
    then $G$ simply outputs an arbitrary element from $\nclos{S_t}{\SC}{i} \setminus S_t$.
    Otherwise, $G$ outputs an arbitrary element.
    Note that when $\abs{S_t} > \NC_i(\SC)$, we have by definition that $\abs{\nclos{S_t}{\SC}{i}} = \infty$.
    Thus for all times $t > \NC_i(\SC)$, the algorithm successfully
    outputs an element from $\nclos{S_t}{\SC}{i} \setminus S_t$.
    By~\cref{lem:noisy_closure_contain}, $\nclos{S_t}{\SC}{i}$ is always a subset of the target language,
    so the algorithm's output at all times $t > \NC_i(\SC)$ must be a correct unseen string.

    For the other direction, assume that $\NC_i(\SC) = \infty$
    and assume for contradiction that there exists an algorithm $G$
    which uniformly generates with noise level $i$ for $\SC$.
    Let $\tst$ be the time at which $G$ generates correct unseen
    strings for all $t \ge \tst$.
    Since $\NC_i(\SC) = \infty$, there must exist a set $S$
    with size $\abs{S} \ge \tst$ such that $\abs{\nclos{S}{\SC}{i}} < \infty$.
    Let $S' = S \cup \nclos{S}{\SC}{i}$.
    We first claim that $\SC(S, i) = \SC(S', i)$.
    Since $S \subseteq S'$, it must be that $\SC(S', i) \subseteq \SC(S, i)$.
    To see that $\SC(S, i) \subseteq \SC(S', i)$, let $L \in \SC(S, i)$
    be an arbitrary language consistent with $S$ at noise level $i$.
    By~\cref{lem:noisy_closure_contain}, we have $(S' \setminus S) \subseteq L$.
    Thus $\abs{S' \setminus L} = \abs{S \setminus L} \le i$, so $L \in \SC(S', i)$ as desired.

    Because $\SC(S, i) = \SC(S', i)$, we have $\nclos{S'}{\SC}{i} = \nclos{S}{\SC}{i} \subseteq S'$.
    Let $t' = \abs{S'}$ and arbitrarily enumerate $S' = \{x_1, \dots, x_{t'}\}$.
    Now consider the algorithm's output $z_{t'} = G(x_{1:t'})$.
    Since $t' \ge \tst$, the output $z_{t'}$ should be an unseen string from the target language.
    However, $\nclos{S'}{\SC}{i} \subseteq S'$, so for any $z_{t'} \notin S'$,
    there must exist some $K \in \SC(S', i)$ such that $z_{t'} \notin K$.
    For such a $K \in \SC$, the sequence $x_1$, \dots, $x_{t'}$ is
    a valid prefix of an enumeration of $K$ with noise level $i$, and the output $z_{t'}$ is not in $K$.
    Thus $G$ does not generate uniformly with noise level $i$.
\end{proof}

We now begin to show that for any $i \ge 1$, uniform generation with noise level $i$
is equivalent to uniform generation with noise level $1$.
The next lemma is the crux of the argument.

\begin{lemma}
    \label{lem:noisy_closure_inf}
    For any collection $\SC$ and noise level $i \ge 2$,
    we have $\NC_{i-1}(\SC) \ge \floor{\sqrt{\NC_i(\SC)}}$.
    In particular, if $\NC_i(\SC) = \infty$, then $\NC_{i-1} (\SC) = \infty$.
\end{lemma}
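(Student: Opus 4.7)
The plan is to prove $\NC_{i-1}(\SC) \ge \floor{\sqrt{\NC_i(\SC)}}$ by an iterative pigeonhole construction. Let $N := \NC_i(\SC)$ and $m := \floor{\sqrt{N}}$; fix a witness $S$ with $|S| = N$, $\SC(S,i) \ne \emptyset$, and $T := \nclos{S}{\SC}{i}$ finite, and write $\sigma_L := S \setminus L$ for $L \in \SC(S,i)$. The goal is to exhibit $S'$ of size at least $m$ with $\SC(S', i-1) \ne \emptyset$ and $\nclos{S'}{\SC}{i-1}$ finite. The ``in particular'' clause then follows by applying the finite bound to witnesses of arbitrarily large size.

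First I would dispatch the trivial case: if every $L \in \SC(S,i)$ has $|\sigma_L| \le i - 1$, then $\SC(S,i) = \SC(S,i-1)$, so $S$ itself witnesses $\NC_{i-1}(\SC) \ge N \ge m$. Otherwise, I would iteratively construct pairwise disjoint size-$i$ noise patterns inside $S$. Starting with $S_0 := S$, at step $j$: if $\nclos{S_j}{\SC}{i-1}$ is finite and $|S_j| = N - ji \ge m$ (which holds while $j \le (m^2 - m)/i$), then $S_j$ is already a valid witness and we stop. Otherwise, $\nclos{S_j}{\SC}{i-1} \setminus T$ is infinite, so for each $z$ in it there exists $L_z \in \SC(S, i)$ with $z \notin L_z$. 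Since $L_z \notin \SC(S_j, i-1)$, we get $|S_j \setminus L_z| \ge i$; combined with $|S \setminus L_z| \le i$ this forces $|\sigma_{L_z}| = i$ and $\sigma_{L_z} \subseteq S_j$. Pigeonhole on the finitely many size-$i$ subsets of $S_j$ then extracts a common $\sigma_j \subseteq S_j$ shared by infinitely many witnesses $\{L_z^{(j)}\}_{z \in Z^{(j)}}$, after which I set $S_{j+1} := S_j \setminus \sigma_j$. If the iteration runs for $m$ steps without terminating, I use the disjoint $\sigma_0, \ldots, \sigma_{m-1}$ to define $S' := \{s_0, \ldots, s_{m-1}\}$ by picking one representative $s_j \in \sigma_j$ per pattern; then $|S'| = m$, and $|S' \cap \sigma_j| = 1 \le i-1$ gives $L_z^{(j)} \in \SC(S', i-1)$, so $\SC(S', i-1) \ne \emptyset$.

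The remaining step, and the main obstacle, is showing $\nclos{S'}{\SC}{i-1}$ is finite. My plan is to choose the representatives $s_j$ carefully so that $S'$ contains no realized size-$i$ noise pattern $\sigma \in \Sigma_i := \{\sigma_L : L \in \SC(S,i),\ |\sigma_L| = i\}$ as a subset: in that case, every $L \in \SC(S,i)$ has $|S' \cap \sigma_L| \le i - 1$, so $\SC(S,i) \subseteq \SC(S', i-1)$ and $\nclos{S'}{\SC}{i-1} \subseteq T$ is finite. The difficulty is that a naive greedy choice of $s_j$ can fail, since every candidate in $\sigma_j$ might complete a forbidden size-$i$ subset together with the previously chosen $s_0, \ldots, s_{j-1}$. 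To resolve this, I would either apply a sunflower-style argument to $\Sigma_i$ to pass to a structured subfamily that admits clean independent-set selection, or use a probabilistic argument that leverages the $\sqrt{N}$-versus-$N$ gap; failing that, I would analyze $\nclos{S'}{\SC}{i-1}$ directly via $\bigcap_{\sigma \in \Sigma \setminus \mathrm{bad}(S')} T_\sigma$ where $T_\sigma := \bigcap_{L:\, \sigma_L = \sigma} L$, and use the rich infinite family $\{L_z^{(j)}\}$ produced by the iteration to confirm that even with a few bad patterns inside $S'$, this intersection remains finite.
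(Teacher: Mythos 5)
Your proposal has a genuine gap, and you flag it yourself: everything hinges on choosing one representative $s_j$ from each extracted pattern $\sigma_j$ so that the transversal $S'$ contains no realized size-$i$ noise pattern, and this step is never proved --- you only list candidate strategies (sunflower-style, probabilistic, direct analysis of the intersection) without carrying any of them out. This is not a removable technicality in your setup: a language $L \in \SC(S,i)$ whose pattern $\sigma_L$ consists of one element from each of $i$ different blocks is exactly the configuration you must avoid, and nothing in your construction rules it out (the pigeonhole only produced, for each block $\sigma_j$, languages whose entire noise lies in that one block; it says nothing about languages whose noise is spread across blocks, which are the ones that can kill every transversal). There is also a secondary problem: your iteration removes $i$ elements per step and needs $m = \floor{\sqrt{N}}$ steps, but your ``otherwise'' branch conflates ``closure infinite'' with ``closure finite but $\abs{S_j} < m$''; when $i > m$ (i.e.\ $\NC_i(\SC) < i^2$) the sets $S_j$ can shrink below $m$ before $m$ patterns are collected, and in that leftover case neither the stopping rule nor the pigeonhole step applies, so the stated quantitative bound is not established (the ``in particular'' clause survives, since there one may take witnesses with $\floor{\sqrt{N}} \ge i$).

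The paper's proof avoids both issues with a different and simpler construction, which is essentially the ingredient you are missing. Partition the witness $S$ of size $k^2$ into $k$ disjoint blocks $S_1, \dots, S_k$ of size $k$; if some block already has finite $(i-1)$-closure you are done, and otherwise pick distinct elements $x_j \in \nclos{S_j}{\SC}{i-1}$ (these need not lie in $S$ at all) and set $A = \{x_1, \dots, x_k\}$. If some $L \in \SC(S,i)$ missed two chosen elements $x_l, x_{l'}$, then $L \notin \SC(S_l, i-1)$ and $L \notin \SC(S_{l'}, i-1)$, so $L$ misses at least $i$ elements from each of two disjoint blocks, hence at least $2i > i$ elements of $S$ --- a contradiction. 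Thus every $L \in \SC(S,i)$ automatically lies in $\SC(A,1) \subseteq \SC(A,i-1)$, with no bad-pattern avoidance needed, and $\nclos{A}{\SC}{i-1} \subseteq \nclos{S}{\SC}{i}$ is finite, giving $\NC_{i-1}(\SC) \ge k$ uniformly in how $i$ compares with $k$. As written, your proposal does not constitute a proof of the lemma.
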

\begin{proof}
    Assume that $\NC_i(\SC) \ge k^2$ for some integer $k \ge 1$.
    We will show that $\NC_{i-1}(\SC) \ge k$.
    Since $\NC_i(\SC) \ge k^2$, there exists some finite set $T$ of size at least $k^2$
    such that $\SC(T, i) \neq \emptyset$ and $\abs{\nclos{T}{\SC}{i}} < \infty$.
    We let $S$ be any subset of $T$ with size $k^2$.
    Since $\SC(T, i) \subseteq \SC(S, i)$ and $\nclos{S}{\SC}{i} \subseteq \nclos{T}{\SC}{i}$,
    we also have $\SC(S, i) \neq \emptyset$ and $\abs{\nclos{S}{\SC}{i}} < \infty$.
    Without loss of generality, we can assume that $S = \{1, \dots, k^2\}$.
    We first wish to construct a partition $S_1 \sqcup \dots \sqcup S_k$ of $S$,
    where each $S_j$ has size exactly $k$, and for every $S_j$ there is at least one language in $\SC$
    that is consistent with $S_j$ at noise level $i - 1$.
    If $k \leq i - 1$, we can partition $S$ into any $k$ equal-sized sets $S_1$, $S_2$, \dots, $S_k$.
    For any $S_j$, every language in $\SC$ is consistent with $S_j$ at noise level $\abs{S_j} = k$.
    Since $k \le i-1$, each language is also consistent with $S_j$ at noise level $i-1$,
    so we have $\SC(S_j, i - 1) = \SC$ and $\SC(S_j, i - 1) \neq \emptyset$ for all $S_j$.

    Now consider when $k \geq i$.
    If there exists some language $L_0 \in \SC(S, i)$ that is also consistent with $S$ at noise level $i - 1$,
    then for any partition of $S$ into $k$ equal-sized sets $S_1$, $S_2$, \dots, $S_k$,
    we have $L_0 \in \SC(S_j, i-1)$ for each $S_j$.
    Otherwise, there does not exist a language in $\SC(S, i)$ that is consistent with $S$ at noise level $i - 1$,
    so $\abs{S \setminus L} = i$ for all languages $L \in \SC(S, i)$.
    That is, every language in $\SC(S, i)$ is missing exactly $i$ elements from $S$.
    Let $L_1 \in \SC(S, i)$ be any language, and let $S \setminus L_1 = \{y_1, y_2, \dots, y_{i}\}$
    be the $i$ elements that $L_1$ is missing from $S$.
    Since $k \ge i$, we can partition $S$ into $k$ equal-sized sets $S_1$, $S_2$, \dots, $S_k$,
    such that each $y_j$ lies in a distinct set of the partition.
    Each $S_j$ contains at most one element from $S \setminus L_1$, so $L_1 \in \SC(S_j, 1)$ for all $S_j$.
    Since $i - 1 \ge 1$, this implies $L_1 \in \SC(S_j, i-1)$ for all $S_j$ as desired.

    In all cases, we have now constructed the desired partition $S_1 \sqcup \dots \sqcup S_k$ of $S$,
    where each $S_j$ has size exactly $k$, and for every $S_j$ there is at least one language in $\SC$ that is
    consistent with $S_j$ at noise level $i - 1$.
    
    If there is some $S_j$ where $\SC(S_j, i-1) \neq \emptyset$
    and $\abs{\nclos{S_j}{\SC}{i-1}} < \infty$, then we have by definition that $\NC_{i-1}(\SC) \ge k$.
    Otherwise, since we constructed each $S_j$ to satisfy $\SC(S_j, i-1) \neq \emptyset$,
    we are in the case where $\abs{\nclos{S_j}{\SC}{i-1}} = \infty$ for every $S_j$.
    Roughly speaking, we will construct a set $A$ of size $k$ that contains one element
    from the closure of each $S_j$ at noise level $i-1$.
    Then, we show that any language $L$ that is consistent with $S$ at noise level $i$ must be consistent with
    at least $k-1$ of the sets $S_j$ at noise level $i-1$.
    Since the elements of $A$ are constructed from the closures of the sets $S_j$ at noise level $i-1$,
    this implies that each such language $L$ contains at least $k-1$ of the elements in $A$,
    so each such $L$ must be consistent with $A$ at noise level $1$.
    Thus, the closure of $A$ at noise level $1$ is a subset of the closure of $S$ at noise level $i$,
    implying that the closure of $A$ at noise level $i-1 \ge 1$ is finite,
    so $\NC_{i-1}(\SC) \ge \abs{A} = k$ as desired.

    More formally, we iteratively build a set $A$ of size $k$ which has finite closure at noise level $i-1$ in $k$
    iterations. At iteration $j$, take $x_j$ to be an arbitrary element from
    $\nclos{S_j}{\SC}{i-1} \setminus \{x_1, \dots, x_{j-1}\}$.
    Since $\abs{\nclos{S_j}{\SC}{i-1}} = \infty$, such an element must always exist.
    Then, we simply let $A = \{x_1, \dots, x_k\}$.
    We wish to show that $\nclos{A}{\SC}{i-1} \subseteq \nclos{S}{\SC}{i}$,
    since that would imply $\abs{\nclos{A}{\SC}{i-1}} \le \abs{\nclos{S}{\SC}{i}} < \infty$,
    which would imply $\NC_{i-1}(\SC) \ge \abs{A} = k$ as desired.
    To show $\nclos{A}{\SC}{i-1} \subseteq \nclos{S}{\SC}{i}$,
    we prove that $\SC(S, i) \subseteq \SC(A, 1)$, i.e.,
    every language consistent with $S$ at noise level $i$ must be consistent with $A$ at noise level $1$.
    Let $L \in \SC(S, i)$ be an arbitrary language that is consistent with $S$ at noise level $i$.
    We claim that there is at most one $j$ such that $x_j \notin L$.

    Assume for contradiction that there are two indices $\ell \neq \ell'$
    such that $x_{\ell} \notin L$ and $x_{\ell'} \notin L$.
    Since each $x_j \in A$ is drawn from $\nclos{S_j}{\SC}{i-1}$,
    we have by~\cref{lem:noisy_closure_contain} that $x_j \in L'$ for each $L' \in \SC(S_j, i-1)$.
    Because $x_\ell \notin L$ and $x_{\ell'} \notin L$,
    it must be that $L \notin \SC(S_{\ell}, i-1)$ and $L \notin \SC(S_{\ell'}, i-1)$.
    Then by definition, we have $\abs{S_{\ell} \setminus L} \ge i$ and $\abs{S_{\ell'} \setminus L} \ge i$.
    Since $S_{\ell}$ and $S_{\ell'}$ are disjoint, this means that $\abs{(S_\ell \cup S_{\ell'}) \setminus L} \ge 2i$.
    However, this would imply that $\abs{S \setminus L} \ge 2i$, which is a contradiction
    because $L$ is consistent with $S$ at noise level $i$.

    Thus, there can be at most one $j$ such that $x_j \notin L$,
    implying that $L$ is consistent with $A$ at noise level $1$.
    Because $L$ was arbitrary, we have $\SC(S, i) \subseteq \SC(A, 1)$, which also gives
    $\SC(S, i) \subseteq \SC(A, i-1)$ because $i-1 \ge 1$.
    This implies that $\nclos{A}{\SC}{i-1} \subseteq \nclos{S}{\SC}{i}$.
    Since $\abs{\nclos{S}{\SC}{i}} < \infty$, we also have $\abs{\nclos{A}{\SC}{i-1}} < \infty$.
    Finally, because $\abs{A} = k$, we have $\NC_{i-1}(\SC) \ge k$.
    This implies $\NC_{i-1}(\SC) \ge \floor{\sqrt{\NC_i(\SC)}}$ as desired.

    Note that if $\NC_i(\SC) = \infty$, then $\NC_i(\SC) \ge k^2$ for every $k \in \NN$.
    Thus, $\NC_{i-1} (\SC) \ge k$ for every $k \in \NN$, so $\NC_{i-1}(\SC) = \infty$.
\end{proof}

We now prove the first part of~\cref{thm:fin_equiv}.

\begin{theorem}
    \label{thm:uniform_equiv}
    For any collection $\SC$ and finite noise level $i \ge 1$,
    the collection $\SC$ is uniformly generatable with noise level $i$
    if and only if $\SC$ is uniformly generatable with noise level $1$.
\end{theorem}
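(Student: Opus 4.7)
The plan is to combine the characterization from \cref{lem:uniform_char_fin} with the structural inequality from \cref{lem:noisy_closure_inf}. The only real work has been done: once we know that the noisy closure dimension at level $i$ is controlled by the noisy closure dimension at level $i-1$, the equivalence between different finite noise levels is a short deduction. The main obstacle (which is already absorbed into \cref{lem:noisy_closure_inf}) is the passage from level $1$ to level $i$; the remaining steps are routine.

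First I would dispatch the easy direction. If $\SC$ is uniformly generatable with noise level $i$ via some algorithm $G$ with witness time $\tst$, then $G$ also uniformly generates with noise level $1$ at the same time $\tst$, since every enumeration with noise level at most $1$ is in particular an enumeration with noise level at most $i$ (using $1 \le i$). No new construction is needed here.

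For the nontrivial direction, suppose $\SC$ is uniformly generatable with noise level $1$. By \cref{lem:uniform_char_fin} applied with $i=1$, this is equivalent to $\NC_1(\SC) < \infty$. I would then proceed by induction on $i$ to show $\NC_i(\SC) < \infty$ for every finite $i \ge 1$. The base case $i = 1$ is immediate. For the inductive step, assume $\NC_{i-1}(\SC) < \infty$ and suppose, for contradiction, that $\NC_i(\SC) = \infty$. Then \cref{lem:noisy_closure_inf} gives $\NC_{i-1}(\SC) = \infty$, contradicting the inductive hypothesis. Hence $\NC_i(\SC) < \infty$. Applying \cref{lem:uniform_char_fin} in the other direction at level $i$, we conclude that $\SC$ is uniformly generatable with noise level $i$, completing the proof.

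Because \cref{lem:noisy_closure_inf} actually gives the quantitative bound $\NC_{i-1}(\SC) \ge \lfloor \sqrt{\NC_i(\SC)} \rfloor$, one could alternatively iterate this inequality to obtain an explicit upper bound $\NC_i(\SC) \le (\NC_1(\SC)+1)^{2^{i-1}}$, and then plug this into the constructive algorithm from the proof of \cref{lem:uniform_char_fin} to extract an explicit witness time $\tst$ for noise level $i$ in terms of the witness time at noise level $1$. I will not grind through this calculation in the write-up since only the qualitative equivalence is claimed, but it is worth noting that the argument is fully effective.
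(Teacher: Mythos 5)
Your proof is correct and takes essentially the same route as the paper: the trivial direction is handled identically, and the hard direction combines \cref{lem:uniform_char_fin} with the contrapositive of \cref{lem:noisy_closure_inf}, which your induction merely packages as a proof by contradiction. (Your optional quantitative aside is slightly off after iterating the inequality---$\NC_i(\SC) < (\NC_{i-1}(\SC)+1)^2$ does not iterate to exactly $(\NC_1(\SC)+1)^{2^{i-1}}$---but since you do not rely on it, the argument stands.)
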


\begin{proof}
    Clearly if a collection $\SC$ is uniformly generatable with noise level $i$ for some $i \ge 1$,
    then $\SC$ is also uniformly generatable with noise level $1$.
    For the other direction, we have by~\cref{lem:uniform_char_fin} that
    $\SC$ is uniformly generatable with noise level $1$ if and only if $\NC_1(\SC) < \infty$.
    From the contrapositive of~\cref{lem:noisy_closure_inf}, we have for any $i \ge 2$
    that $\NC_{i-1}(\SC) < \infty$ implies $\NC_i(\SC) < \infty$.
    Thus if $\NC_1(\SC) < \infty$, then $\NC_i(\SC) < \infty$ for every $i \ge 1$.
    This implies that $\SC$ is uniformly generatable with noise level $i$ for every $i \ge 1$.
\end{proof}

Our results naturally yield a simpler characterization for uniform noise-dependent generation.
We first state the following characterization that was proved in~\cite{RR25}.

\begin{theorem}[Theorem~$3.3$ in~\cite{RR25}]
    \label{thm:prev_uniform_char_inf}
    A collection $\SC$ is uniformly noise-dependently generatable if and only if
    $\NC_i(\SC) < \infty$ for all $i \ge 1$.
\end{theorem}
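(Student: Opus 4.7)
My plan is a two-direction argument. The forward direction is essentially immediate from \cref{lem:uniform_char_fin}; the reverse direction is where the substance lies, and requires constructing a single algorithm that adapts its hedging level based on how much data it has observed.

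For the forward direction, suppose $\SC$ is uniformly noise-dependently generatable via some algorithm $G$. For each $i \ge 1$, specializing the definition to $\nst = i$ gives a time $\tst_i$ after which $G$ is correct on every enumeration of every $K \in \SC$ with at most $i$ noise strings. This is exactly the definition of uniform generation with noise level $i$, so \cref{lem:uniform_char_fin} yields $\NC_i(\SC) < \infty$.

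For the reverse direction, assume $\NC_i(\SC) < \infty$ for every $i \ge 1$ (and hence also for $i = 0$, since $\NC_i$ is nondecreasing in $i$: any witness set $S$ for $\NC_i$ also witnesses $\NC_{i+1}$ because $\SC(S,i) \subseteq \SC(S,i+1)$ and $\nclos{S}{\SC}{i+1} \subseteq \nclos{S}{\SC}{i}$). Define $f(t) = \max\{i \le t : \NC_i(\SC) < t\}$; the cap $i \le t$ keeps $f(t)$ finite, and the finiteness of each $\NC_i(\SC)$ ensures $f(t) \to \infty$. At each step $t$, the algorithm outputs an arbitrary element of $\nclos{S_t}{\SC}{f(t)} \setminus S_t$ when this set is nonempty, and an arbitrary string otherwise. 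To verify correctness for any fixed $\nst$, take $t$ large enough that both $t \ge \nst$ and $\NC_\nst(\SC) < t$, which forces $f(t) \ge \nst$. Then $K \in \SC(S_t, \nst) \subseteq \SC(S_t, f(t))$, so \cref{lem:noisy_closure_contain} gives $\nclos{S_t}{\SC}{f(t)} \subseteq K$. Moreover $|S_t| = t + 1 > t > \NC_{f(t)}(\SC)$ with $\SC(S_t, f(t)) \neq \emptyset$, so by the definition of the closure dimension this closure must be infinite, guaranteeing a fresh string in $K \setminus S_t$. Thus $\tst_\nst = \max(\nst, \NC_\nst(\SC)) + 1$ works.

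The main subtlety is choosing $f$ to balance two competing tensions: $f(t)$ must grow fast enough that $f(t) \ge \nst$ kicks in for every fixed $\nst$ (otherwise the closure need not lie inside $K$ and the output could be wrong), yet slow enough that $|S_t|$ always exceeds $\NC_{f(t)}(\SC)$ (otherwise the closure might be finite and trap the algorithm inside $S_t$ with no unseen string available). The definition $f(t) = \max\{i \le t : \NC_i(\SC) < t\}$ threads this needle precisely because each $\NC_i(\SC)$ is finite---so for any fixed $i$ we eventually have both $\NC_i(\SC) < t$ and $i \le t$, giving $f(t) \ge i$---while by construction $\NC_{f(t)}(\SC) < t < |S_t|$ holds automatically.
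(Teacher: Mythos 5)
Your proof is correct, but note that the paper itself offers no argument for this statement---it is imported verbatim as Theorem~3.3 of \cite{RR25} and used as a black box (the paper's own contribution, \cref{thm:uniform_char_inf}, is derived \emph{from} it together with \cref{thm:uniform_equiv}). So what you have written is a self-contained reproof of the cited result. Your forward direction is exactly the intended specialization: fixing $\nst = i$ in the definition of uniform noise-dependent generation recovers the definition of uniform generation with noise level $i$, and the ``only if'' half of \cref{lem:uniform_char_fin} gives $\NC_i(\SC) < \infty$. Your reverse direction is the interesting part and it checks out: the adaptive hedging level $f(t) = \max\{i \le t : \NC_i(\SC) < t\}$ guarantees both $f(t) \ge \nst$ once $t > \max(\nst, \NC_\nst(\SC))$ (so $K \in \SC(S_t, f(t))$ and \cref{lem:noisy_closure_contain} puts the closure inside $K$) and $\NC_{f(t)}(\SC) < t < \abs{S_t}$ (so, since $\SC(S_t,f(t)) \ni K$ is nonempty, the closure must be infinite and an unseen string exists); your monotonicity observation $\NC_i(\SC) \le \NC_{i+1}(\SC)$ correctly handles the $\nst = 0$ case, which the hypothesis does not cover directly. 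The only blemish is that $f(t)$ is a maximum over a possibly empty set for small $t$ (e.g.\ whenever $\NC_0(\SC) \ge t$); you should set $f(t) = 0$, or have the algorithm output arbitrarily, in that regime---this is cosmetic, since correctness is only claimed for $t \ge \max(\nst, \NC_\nst(\SC)) + 1$, and that threshold depends only on $\nst$ as the definition requires. Conceptually, your generator is the natural time-varying extension of the fixed-level closure generator in the paper's proof of \cref{lem:uniform_char_fin}, which is presumably close to the original argument in \cite{RR25}; it does not, however, yield the paper's sharper conclusion that $\NC_1(\SC) < \infty$ alone suffices, which is what \cref{lem:noisy_closure_inf} adds.
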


Since uniform generation with noise level $i$ is equivalent for all $i \ge 1$,
we obtain the following characterization.

\begin{restatable}{theorem}{uniformcharinf}
    \label{thm:uniform_char_inf}
    A collection $\SC$ is uniformly noise-dependently generatable
    if and only if $\NC_1(\SC) < \infty$.
\end{restatable}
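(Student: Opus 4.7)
The plan is to derive the desired characterization by chaining results already established in the section, so that no new combinatorial argument is required. Specifically, I will show that the new condition $\NC_1(\SC) < \infty$ is equivalent to the existing characterization of~\cite{RR25} (\cref{thm:prev_uniform_char_inf}), which asserts that $\SC$ is uniformly noise-dependently generatable iff $\NC_i(\SC) < \infty$ for every $i \ge 1$. From there the theorem is immediate.

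The forward direction is essentially free. If $\SC$ is uniformly noise-dependently generatable, then \cref{thm:prev_uniform_char_inf} gives $\NC_i(\SC) < \infty$ for every $i \ge 1$, and specializing to $i = 1$ yields $\NC_1(\SC) < \infty$.

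For the backward direction, assume $\NC_1(\SC) < \infty$. The contrapositive of \cref{lem:noisy_closure_inf} says that whenever $\NC_{i-1}(\SC) < \infty$, we also have $\NC_i(\SC) < \infty$. A one-line induction on $i$, starting from the hypothesis as the base case and iterating the contrapositive, therefore yields $\NC_i(\SC) < \infty$ for every $i \ge 1$. A single application of \cref{thm:prev_uniform_char_inf} then concludes that $\SC$ is uniformly noise-dependently generatable.

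Since both implications reduce to prior results, there is essentially no obstacle; the real content sits inside \cref{lem:noisy_closure_inf}, which has already been proved. As an alternative route, one can sidestep \cref{thm:prev_uniform_char_inf} and argue directly from \cref{thm:uniform_equiv}: by \cref{lem:uniform_char_fin}, $\NC_1(\SC) < \infty$ is equivalent to uniform generation with noise level $1$, which by \cref{thm:uniform_equiv} is equivalent to uniform generation with noise level $i$ for every $i \ge 1$, and this last condition plainly implies uniform noise-dependent generation (for each $\nst$, let $\tst$ be the witnessing uniform time for noise level $\nst$). Either route closes the loop and gives the claimed characterization.
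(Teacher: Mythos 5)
Your proof is correct and follows essentially the same route as the paper: both directions reduce to \cref{thm:prev_uniform_char_inf} after establishing that $\NC_1(\SC) < \infty$ implies $\NC_i(\SC) < \infty$ for all $i \ge 1$, which the paper obtains via \cref{thm:uniform_equiv} and \cref{lem:uniform_char_fin} and you obtain by iterating the contrapositive of \cref{lem:noisy_closure_inf} directly (the same underlying argument). Your alternative route through \cref{thm:uniform_equiv} is also fine and matches the paper's machinery.
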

\begin{proof}
    From~\cref{thm:uniform_equiv} and~\cref{lem:uniform_char_fin},
    we have for any $i \ge 1$ that $\NC_i(\SC) < \infty$ if and only if $\NC_1(\SC) < \infty$.
    Applying~\cref{thm:prev_uniform_char_inf}, we have the desired result.
\end{proof}

In summary, we show that uniform generation with noise level $1$
is equivalent to uniform generation with noise level $i$ for any $i \ge 1$,
which is further equivalent to uniform noise-dependent generation.
These forms of generation are all characterized by collections $\SC$ that satisfy $\NC_1(\SC) < \infty$.

\uniformfull*
\begin{proof}
    The equivalence of~\cref{list:uniform1} and~\cref{list:uniform2}
    is given by~\cref{thm:uniform_equiv} and~\cref{lem:uniform_char_fin}.
    The equivalence of~\cref{list:uniform1} and~\cref{list:uniform3}
    is given by~\cref{thm:uniform_char_inf}.
    To see that there exists a collection that is uniformly generatable without noise,
    but is not uniformly generatable with noise level $1$,
    we note that~\cite{RR25} (Theorem $3.5$) proved there exists a collection $\SC$
    which is uniformly generatable without noise, but is not uniformly noise-dependently generatable.
    Since uniform noise-dependent generatability is equivalent to uniform generatability with noise level $1$,
    this gives a collection that is uniformly generatable without noise,
    but is not uniformly generatable with noise level $1$.
\end{proof}

    \section{Noisy non-uniform generation}\label{sec:nonunif}In this section, we prove the portion of our results concerning noisy non-uniform generation.
We first give an initial complete characterization
for collections that are non-uniformly generatable with noise level~$i$.
Then, we leverage the results shown in the previous section to
prove the corresponding equivalences for noisy non-uniform generation.

\begin{lemma}
    \label{lem:nonuniform_char_fin}
    A collection $\SC$ is non-uniformly generatable with noise level $i$
    if and only if there exists a countable sequence of collections $\SC_0 \subseteq \SC_1 \subseteq \dots$
    such that $\SC = \bigcup_{j \in \NN} \SC_j$ and $\NC_i(\SC_j) < \infty$ for all $j \in \NN$.
\end{lemma}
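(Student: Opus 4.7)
My plan is to prove the two directions separately, mirroring the template of \cref{lem:uniform_char_fin} but lifted to the non-uniform setting via a level-by-level construction. I will call ``algorithm exists $\Rightarrow$ decomposition exists'' the \emph{forward direction} and ``decomposition exists $\Rightarrow$ algorithm exists'' the \emph{backward direction}.

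For the forward direction, I would let $G$ be a non-uniform generator with noise level $i$ for $\SC$, let $\tst(K)$ denote its promised stabilization time for each $K \in \SC$, and set $\SC_j \coloneq \{K \in \SC : \tst(K) \le j\}$. This sequence is trivially nested with $\SC = \bigcup_j \SC_j$. To show $\NC_i(\SC_j) < \infty$, I would adapt the contradiction used in the second half of \cref{lem:uniform_char_fin}'s proof: assuming $\NC_i(\SC_j) = \infty$, pick a set $S$ with $|S| > j$ and $|\nclos{S}{\SC_j}{i}| < \infty$, form $S' \coloneq S \cup \nclos{S}{\SC_j}{i}$, and verify both $\SC_j(S, i) = \SC_j(S', i)$ and $\nclos{S'}{\SC_j}{i} \subseteq S'$ by the same containment argument. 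Enumerating $S'$ arbitrarily as $x_0, \dots, x_{|S'|-1}$ yields a prefix of a valid noisy enumeration for every $K \in \SC_j(S', i)$, and $G$'s output at time $|S'|-1 > j \ge \tst(K)$ must either lie in $S'$ (so ``seen'') or outside $\nclos{S'}{\SC_j}{i}$ (witnessed as absent from some $K \in \SC_j(S', i)$ via \cref{lem:noisy_closure_contain}), in both cases contradicting correctness.

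For the backward direction, let $d_j \coloneq \NC_i(\SC_j) < \infty$, which is non-decreasing in $j$ because $\SC_j \subseteq \SC_{j+1}$. At time $t$, my algorithm computes $j_t \coloneq \max\{j \le t : d_j \le t\}$ (outputting arbitrarily when this set is empty) and outputs any element of $\nclos{S_t}{\SC_{j_t}}{i} \setminus S_t$. Two structural facts drive correctness: (i) each $d_j$ is finite, so every fixed index $j$ eventually satisfies both $j \le t$ and $d_j \le t$, giving $j_t \to \infty$; and (ii) $|S_t| = t+1 > t \ge d_{j_t}$ holds by construction. Now fix a target $K$ and set $j^\star \coloneq \min\{j : K \in \SC_j\}$. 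For every $t \ge \max(j^\star, d_{j^\star})$, the index $j^\star$ qualifies in the definition of $j_t$, so $j_t \ge j^\star$ and hence $K \in \SC_{j_t}$. Since the enumeration has noise level at most $i$, we have $K \in \SC_{j_t}(S_t, i)$, so \cref{lem:noisy_closure_contain} gives $\nclos{S_t}{\SC_{j_t}}{i} \subseteq K$. Combined with fact (ii) and the definition of $\NC_i$, the closure must be infinite, so $\nclos{S_t}{\SC_{j_t}}{i} \setminus S_t$ is nonempty and the algorithm's output lies in $K \setminus S_t$. The stabilization time $\max(j^\star, d_{j^\star})$ depends only on $K$, giving non-uniform generation.

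The step I expect to be the main obstacle is engineering the growth rate of $j_t$ in the backward direction: it must grow to infinity quickly enough to eventually exceed the unknown target index $j^\star$ (so that the closure is forced into $K$ by \cref{lem:noisy_closure_contain}), yet slowly enough that $|S_t|$ continues to dominate $\NC_i(\SC_{j_t})$ (so that the closure remains infinite and yields an unseen element). The dovetailed rule $\max\{j \le t : d_j \le t\}$ is designed to satisfy both, but checking that both invariants remain uniformly available for every target and every noisy enumeration is the delicate heart of the argument.
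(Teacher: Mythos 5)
Your proposal is correct and follows essentially the same route as the paper: the forward direction uses the identical decomposition $\SC_j = \{K : \tst(K) \le j\}$, and the backward direction uses the same dovetailing rule $j_t = \max\{j \le t : \text{threshold}_j \le t\}$, the only difference being that you unfold \cref{lem:uniform_char_fin} inline (re-running its lower-bound contradiction with $S' = S \cup \nclos{S}{\SC_j}{i}$, and instantiating the uniform sub-generators explicitly as closure-based generators with threshold $d_j = \NC_i(\SC_j)$) where the paper invokes it as a black box. Both key checks you flag as delicate---$j_t \ge j^\star$ eventually, and $|S_t| > \NC_i(\SC_{j_t})$ forcing an infinite closure contained in $K$ via \cref{lem:noisy_closure_contain}---go through exactly as you describe.
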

\begin{proof}
    First assume that there exists an algorithm $G$ that non-uniformly generates with noise level $i$ for $\SC$.
    We show that there exists a countable sequence of collections $\SC_0 \subseteq \SC_1 \subseteq \dots$
    with the desired properties.
    For each language $L \in \SC$, let $\tst(L)$ be the time at which $G$ must generate correctly for $L$.
    Since $G$ non-uniformly generates with noise level $i$ for $\SC$,
    each language $L$ must have a finite time $\tst(L)$.
    For each $j \in \NN$, let $\SC_j = \{L \mid \tst(L) \le j\}$.
    It is clear that this forms a sequence of collections $\SC_0 \subseteq \SC_1 \subseteq \dots$.
    Since each language $L \in \SC$ has a finite time $\tst(L)$, we have $\SC = \bigcup_{j \in \NN} \SC_j$.
    Finally, to see that $\NC_i(\SC_j) < \infty$ for all $j \in \NN$,
    note that for every $\SC_j$, we have $\tst(L) \le j$ for all $L \in \SC_j$.
    Thus, $G$ uniformly generates with noise level $i$ for each $\SC_j$,
    so we have by~\cref{lem:uniform_char_fin} that $\NC_i(\SC_j) < \infty$ for all $j \in \NN$.

    For the other direction, assume there exists a countable sequence of collections
    $\SC_0 \subseteq \SC_1 \subseteq \dots$ such that
    $\SC = \bigcup_{j \in \NN} \SC_j$ and $\NC_i(\SC_j) < \infty$ for all $j \in \NN$.
    By~\cref{lem:uniform_char_fin}, for any $j \in \NN$, there must exist an algorithm $G_j$
    which uniformly generates with noise level $i$ for the collection $\SC_j$.
    For each $\SC_j$, let $\tst(\SC_j)$ be the time at which algorithm $G_j$ generates
    correctly for $\SC_j$.
    We design an algorithm $G$ that non-uniformly generates with noise level $i$ for $\SC$.
    At any time $t$, let index $j_t = \max(\{0\} \cup \{j \le t \mid \tst(\SC_j) \le t\})$.
    Then our algorithm simply outputs according to the algorithm $G_{j_t}$,
    i.e., $G(x_0, \dots, x_t) = G_{j_t}(x_0, \dots, x_t)$.
    Intuitively, our algorithm perpetually moves forwards through the sequence of collections,
    while ensuring that we always generate correctly according to the collection $\SC_{j_t}$.

    To see why $G$ non-uniformly generates with noise level $i$ for $\SC$,
    let $K \in \SC$ be an arbitrary target language.
    Since $\SC = \bigcup_{j \in \NN} \SC_j$, there must exist a finite index $t_1$
    such that $K \in \SC_j$ for all $j \ge t_1$.
    Let $\tst = \max(t_1, \tst(\SC_{t_1}))$.
    We claim that $G(x_0, \dots, x_t) \in K \setminus S_t$ for all $t \ge \tst$.
    First note that for all $t \ge \tst$, we have $\tst(\SC_{t_1}) \le t$ and $t_1 \le t$.
    Thus, it must be that $j_t \ge t_1$, so $K \in \SC_{j_t}$.
    Furthermore, since $t \ge \tst(\SC_{j_t})$, the output $G(x_0, \dots, x_t) = G_{j_t}(x_0, \dots, x_t)$
    must be in $K \setminus S_t$,
    so $G$ must generate non-uniformly with noise level $i$ for $\SC$.
\end{proof}

We now prove the second half of~\cref{thm:fin_equiv}.

\begin{theorem}
    \label{thm:nonuniform_equiv}
    For any collection $\SC$ and finite noise level $i \ge 1$,
    the collection $\SC$ is non-uniformly generatable with noise level $i$
    if and only if $\SC$ is non-uniformly generatable with noise level $1$.
\end{theorem}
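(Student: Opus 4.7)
My plan is to reduce the claim to a clean application of the two structural ingredients already available: the characterization of non-uniform generation with noise level $i$ as a countable union of sub-collections with finite $\NC_i$ dimension (\cref{lem:nonuniform_char_fin}), together with the monotonicity of the noisy closure dimension in the noise parameter (\cref{lem:noisy_closure_inf}).

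The forward direction should be immediate: if $G$ non-uniformly generates with noise level $i \ge 1$, then since every enumeration of noise level at most $1$ is also an enumeration of noise level at most $i$, the very same algorithm $G$ witnesses non-uniform generatability with noise level $1$, with the same target-dependent time bound $\tst(K)$.

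For the substantive backward direction, I would first invoke \cref{lem:nonuniform_char_fin} at noise level $1$ to extract a chain of sub-collections $\SC_0 \subseteq \SC_1 \subseteq \dots$ with $\SC = \bigcup_{j \in \NN} \SC_j$ and $\NC_1(\SC_j) < \infty$ for every $j \in \NN$. The goal is to show that this very same chain also witnesses non-uniform generatability at noise level $i$ via the other direction of \cref{lem:nonuniform_char_fin}. To get there, I would apply the contrapositive of \cref{lem:noisy_closure_inf}---the bound $\NC_{i'-1}(\SC_j) \ge \floor{\sqrt{\NC_{i'}(\SC_j)}}$ says that finiteness of the noisy closure dimension propagates upward in the noise parameter---iteratively from $i' = 2$ up to $i' = i$, concluding $\NC_i(\SC_j) < \infty$ for every $j \in \NN$. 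Feeding this back into \cref{lem:nonuniform_char_fin} at noise level $i$ then produces the desired non-uniform generator for $\SC$ at noise level $i$.

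The reason I do not anticipate any real obstacle is that all the genuinely combinatorial work has already been carried out upstream: \cref{lem:noisy_closure_inf} is the sole nontrivial input, and \cref{lem:nonuniform_char_fin} automatically lifts the uniform-to-non-uniform reduction via the standard ``enumerate sub-collections by generation time'' trick. This argument is essentially bookkeeping and exactly mirrors the uniform case \cref{thm:uniform_equiv}, with the countable union of finite-$\NC_1$-dimension sub-collections replacing a single collection of finite $\NC_1$ dimension.
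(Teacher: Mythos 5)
Your proposal is correct and follows essentially the same route as the paper: the forward direction is immediate, and for the converse you extract the chain $\SC_0 \subseteq \SC_1 \subseteq \dots$ with $\NC_1(\SC_j) < \infty$ via \cref{lem:nonuniform_char_fin}, propagate finiteness to $\NC_i(\SC_j)$, and apply \cref{lem:nonuniform_char_fin} again at level $i$. The only cosmetic difference is that you invoke the contrapositive of \cref{lem:noisy_closure_inf} directly, whereas the paper cites \cref{thm:uniform_equiv} together with \cref{lem:uniform_char_fin}, which amounts to the same propagation step.
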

\begin{proof}
    Clearly if $\SC$ is non-uniformly generatable with noise level $i$ for some $i \ge 1$,
    then $\SC$ is non-uniformly generatable with noise level $1$.
    For the other direction, assume that $\SC$ is non-uniformly generatable with noise level $1$.
    From~\cref{lem:nonuniform_char_fin}, there must exist a
    countable sequence of collections $\SC_0 \subseteq \SC_1 \subseteq \dots$
    such that $\SC = \bigcup_{j \in \NN} \SC_j$ and $\NC_1(\SC_j) < \infty$ for all $j \in \NN$.
    Applying~\cref{thm:uniform_equiv,lem:uniform_char_fin}, we have for any noise level $i \ge 1$
    that $\NC_i(\SC_j) < \infty$ for every $j \in \NN$.
    Thus by~\cref{lem:nonuniform_char_fin}, $\SC$ is non-uniformly generatable with noise level $i$ as desired.
\end{proof}

We now provide a complete characterization for non-uniform noise-dependent generation.
We first state the sufficient and necessary conditions for non-uniform noise-dependent generation
proved in~\cite{RR25}.

\begin{lemma}[Lemma~$3.6$ in \cite{RR25}]
    \label{lem:nonunif_suff}
    A collection $\SC$ is non-uniformly noise-dependently generatable
    if there exists a countable sequence of collections $\SC_0 \subseteq \SC_1 \subseteq \dots$
    such that $\SC = \bigcup_{i \in \NN} \SC_i$ and $\NC_i(\SC_i) < \infty$ for all $i \in \NN$.
\end{lemma}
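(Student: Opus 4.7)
The plan is to adapt the construction from the proof of~\cref{lem:nonuniform_char_fin}, but with the twist that the noise tolerance of the inner generator grows alongside the index. First, I would apply~\cref{lem:uniform_char_fin} to each $\SC_j$: since $\NC_j(\SC_j) < \infty$, there exists an algorithm $G_j$ that uniformly generates with noise level $j$ for $\SC_j$, succeeding after some finite time $\tst(\SC_j)$.

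Next, I would define a composite algorithm $G$ that at each time $t$ selects an index $j_t = \max(\{0\} \cup \{j \le t \mid \tst(\SC_j) \le t\})$ and outputs $G_{j_t}(x_{0:t})$. Since each $\tst(\SC_j)$ is finite, every index $j$ eventually satisfies the condition $\tst(\SC_j) \le t$, so $j_t$ diverges to infinity as $t \to \infty$. Intuitively, $G$ marches forward through the nested sequence $\SC_0 \subseteq \SC_1 \subseteq \dots$, simultaneously raising the noise tolerance it can handle.

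Finally, I would verify correctness for an arbitrary noise level $\nst \in \NN$ and target language $K \in \SC$. Since $\SC = \bigcup_j \SC_j$ and the $\SC_j$ are nested, there exists $i_0$ such that $K \in \SC_j$ for all $j \ge i_0$. Because $j_t \to \infty$, there exists some finite $\tst$ such that $j_t \ge \max(\nst, i_0)$ for all $t \ge \tst$. For such $t$ and any enumeration $x$ of $K$ with noise level $\nst$, I would argue that (i) $K \in \SC_{j_t}$ since $j_t \ge i_0$; (ii) $x$ is also a valid enumeration of $K$ with noise level $j_t$ because $j_t \ge \nst$; and (iii) $t \ge \tst(\SC_{j_t})$ by the very choice of $j_t$. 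Combining these, $G_{j_t}(x_{0:t}) \in K \setminus S_t$ follows immediately from the uniform generation guarantee for $G_{j_t}$ on $\SC_{j_t}$ at noise level $j_t$.

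The main subtlety, more than a genuine obstacle, is that the chosen index $j_t$ must simultaneously tolerate the unknown noise level $\nst$, contain the unknown target $K$, and be ``ready'' in the sense that $\tst(\SC_{j_t}) \le t$. All three conditions are delivered by the single device of letting $j_t$ grow to infinity while obeying $\tst(\SC_{j_t}) \le t$; this is essentially a direct adaptation of the proof of~\cref{lem:nonuniform_char_fin}, now exploiting the crucial fact that the noise tolerance of $G_j$ is exactly $j$ and so rises in lockstep with the index along which we march.
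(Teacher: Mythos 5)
Your argument is correct. Note that the paper does not actually prove this statement itself --- it is imported as Lemma~3.6 of~\cite{RR25} and used as a black box --- so your proposal supplies a self-contained proof, and it does so by exactly the device one would expect: the leveling construction from the sufficiency direction of~\cref{lem:nonuniform_char_fin}, with the inner generator's noise tolerance rising in lockstep with the index $j_t$. All the steps check out: $\NC_j(\SC_j)<\infty$ gives $G_j$ and a finite $\tst(\SC_j)$ via~\cref{lem:uniform_char_fin}; for any fixed $j^\star$ and all $t \ge \max(j^\star, \tst(\SC_{j^\star}))$ the candidate set contains $j^\star$, so $j_t \ge j^\star$ and moreover $\tst(\SC_{j_t}) \le t$ by construction; nestedness gives $K \in \SC_{j_t}$ once $j_t \ge i_0$; and since ``enumeration with noise level $i$'' is defined with at most $i$ extraneous strings, an enumeration with noise level $\nst$ is automatically a valid enumeration with noise level $j_t \ge \nst$, so the uniform guarantee of $G_{j_t}$ on $\SC_{j_t}$ applies. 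The resulting threshold $\max(j^\star, \tst(\SC_{j^\star}))$ with $j^\star = \max(\nst, i_0)$ depends only on $\nst$ and $K$, which is precisely what non-uniform noise-dependent generation permits.
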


\begin{lemma}[Lemma~$3.8$ in \cite{RR25}]
    \label{lem:nonunif_nec}
    If a collection $\SC$ is non-uniformly noise-dependently generatable, then for every noise level $i \ge 1$,
    there exists a countable sequence of collections $\SC_0 \subseteq \SC_1 \subseteq \dots$
    such that $\SC = \bigcup_{j \in \NN} \SC_j$ and $\NC_i(\SC_j) < \infty$ for all $j \in \NN$.
\end{lemma}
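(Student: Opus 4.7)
The plan is to mimic the forward direction of the proof of~\Cref{lem:nonuniform_char_fin}. Fix an arbitrary noise level $i \ge 1$ and let $G$ be an algorithm that non-uniformly noise-dependently generates for $\SC$. By the definition of non-uniform noise-dependent generation, for every language $L \in \SC$ there exists a finite threshold $\tst(L, i) \in \NN$ such that, on every enumeration of $L$ with noise level $i$ (which, by the definition of enumeration with noise level~$i$, means at most $i$ noisy strings), the algorithm's output $z_t$ lies in $L \setminus S_t$ for all $t \ge \tst(L, i)$.

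For each $j \in \NN$, set $\SC_j \coloneq \{L \in \SC \mid \tst(L, i) \le j\}$. Then $\SC_0 \subseteq \SC_1 \subseteq \cdots$ holds by construction, and $\SC = \bigcup_{j \in \NN} \SC_j$ holds because each $\tst(L, i)$ is finite. The core reduction I would carry out is to argue that for every $j$, the same algorithm $G$ uniformly generates with noise level $i$ for the sub-collection $\SC_j$, using the common threshold $\tst = j$. Granting this, the conclusion $\NC_i(\SC_j) < \infty$ follows immediately from the forward direction of~\Cref{lem:uniform_char_fin}.

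The only step requiring a careful check is the uniform-generation claim for $\SC_j$. Since $G$ is a pure function of its input sequence and is oblivious to which sub-collection is ``active'', any enumeration of $L \in \SC_j$ at noise level $i$ is in particular an enumeration of a language in $\SC$ at noise level $i$; hence $G$ generates correctly past time $\tst(L, i) \le j$, which is a uniform bound across all $L \in \SC_j$ by construction of $\SC_j$. I expect this to be bookkeeping rather than a real obstacle: the entire argument is just bucketing languages by their individual threshold times and invoking~\Cref{lem:uniform_char_fin}. In particular, none of the more delicate counting from~\Cref{lem:noisy_closure_inf} is needed here, because we are only establishing necessity for a single fixed noise level $i$, not transferring generatability between noise levels.
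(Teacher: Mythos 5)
Your proof is correct: fixing $i$, bucketing the languages by their individual thresholds $\tst(L, i)$, observing that $G$ uniformly generates with noise level $i$ on each bucket $\SC_j$ with the common time $j$ (since an enumeration with noise level at most $i$ is by definition an enumeration with noise level $i$), and invoking the forward direction of \cref{lem:uniform_char_fin} indeed yields $\NC_i(\SC_j) < \infty$. Note that the paper does not prove this statement itself---it cites it as Lemma~3.8 of \cite{RR25}---but your argument is precisely the same threshold-bucketing technique the paper uses in the forward direction of \cref{lem:nonuniform_char_fin}, so it serves as a correct, self-contained substitute, and you are right that none of the counting from \cref{lem:noisy_closure_inf} is needed for a single fixed noise level.
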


It is not immediately clear whether the two conditions above are equivalent,
and in fact~\cite{RR25} conjectured that the true characterization for non-uniform noise-dependent generatability
would need to go beyond the sufficient and necessary conditions given above.
However, we show surprisingly that the conditions in~\cref{lem:nonunif_suff} and~\cref{lem:nonunif_nec}
are in fact both equivalent to non-uniform generation with noise level $1$.
Thus, non-uniform noise-dependent generatability is also equivalent to non-uniform generation with noise level $1$.

\begin{lemma}
    \label{lem:nonunif_suff_char}
    A collection $\SC$ is non-uniformly noise-dependently generatable
    if there exists a countable sequence of collections $\SC_0 \subseteq \SC_1 \subseteq \dots$
    such that $\SC = \bigcup_{i \in \NN} \SC_i$ and $\NC_1(\SC_i) < \infty$ for all $i \in \NN$.
\end{lemma}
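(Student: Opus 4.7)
The plan is to reduce the statement to the already-known sufficient condition in \cref{lem:nonunif_suff} by showing that the hypothesis here is actually stronger than it appears. Concretely, the key observation is that \cref{lem:noisy_closure_inf} gives, via its contrapositive, that $\NC_{i-1}(\SC') < \infty$ implies $\NC_i(\SC') < \infty$ for any collection $\SC'$ and any $i \ge 2$: if $\NC_{i-1}(\SC') = N$ is finite, then $N \ge \floor{\sqrt{\NC_i(\SC')}}$, so $\NC_i(\SC') \le (N+1)^2 < \infty$. Iterating this from the base case $i=1$ upward, $\NC_1(\SC') < \infty$ forces $\NC_i(\SC') < \infty$ for every $i \ge 1$.

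Next, I would apply this to each collection $\SC_j$ in the given chain $\SC_0 \subseteq \SC_1 \subseteq \dots$. By hypothesis $\NC_1(\SC_j) < \infty$ for every $j \in \NN$, so by the observation above we have $\NC_i(\SC_j) < \infty$ for every $i \ge 1$ and every $j \in \NN$. In particular, specializing to $i = j$ (for $j \ge 1$; the $j=0$ case is handled trivially by taking $\NC_0(\SC_0) \le \NC_1(\SC_0) < \infty$ or simply by noting that $\SC_0 \subseteq \SC_1$ and we can also shift indices if needed), we obtain $\NC_j(\SC_j) < \infty$ for every $j \in \NN$.

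Finally, the chain $\SC_0 \subseteq \SC_1 \subseteq \dots$ with $\bigcup_{j \in \NN} \SC_j = \SC$ and $\NC_j(\SC_j) < \infty$ for all $j$ is exactly the hypothesis of \cref{lem:nonunif_suff}. Invoking that lemma directly yields that $\SC$ is non-uniformly noise-dependently generatable, completing the proof.

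There is essentially no obstacle here: the entire work is done by the monotonicity of finiteness of $\NC_i$ in $i$, which is a clean consequence of \cref{lem:noisy_closure_inf}. The only mild subtlety worth stating carefully is the direction of the implication (larger noise level should \emph{not} make the closure dimension larger, so one has to invoke the contrapositive of \cref{lem:noisy_closure_inf} rather than the lemma itself); once that is spelled out, the rest is just plugging into \cref{lem:nonunif_suff}.
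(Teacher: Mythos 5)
Your proof is correct and follows essentially the same route as the paper: reduce to \cref{lem:nonunif_suff} by upgrading $\NC_1(\SC_j) < \infty$ to $\NC_j(\SC_j) < \infty$ for each $j$. The only cosmetic difference is that you invoke the contrapositive of \cref{lem:noisy_closure_inf} directly (and handle $j=0$ explicitly), whereas the paper cites \cref{thm:uniform_equiv} together with \cref{lem:uniform_char_fin}, which encapsulate exactly that argument.
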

\begin{proof}
    Let $\SC_0 \subseteq \SC_1 \subseteq \dots$ be a countable sequence of collections
    such that $\SC = \bigcup_{i \in \NN} \SC_i$ and $\NC_1(\SC_i) < \infty$ for all $i \in \NN$.
    By~\cref{lem:nonunif_suff}, it suffices to show that
    there exists a countable sequence of collections $\SC_0 \subseteq \SC_1 \subseteq \dots$
    such that $\SC = \bigcup_{i \in \NN} \SC_i$ and $\NC_i(\SC_i) < \infty$ for all $i \in \NN$.
    This is true because by~\cref{thm:uniform_equiv,lem:uniform_char_fin,lem:ncd_monotone},
    $\NC_1(\SC_i) < \infty$ implies $\NC_i(\SC_i) < \infty$ for any $i \in \NN$.
\end{proof}

\begin{lemma}
    \label{lem:nonunif_nec_char}
    If a collection $\SC$ is non-uniformly noise-dependently generatable
    then there exists a countable sequence of collections $\SC_0 \subseteq \SC_1 \subseteq \dots$
    such that $\SC = \bigcup_{i \in \NN} \SC_i$ and $\NC_1(\SC_i) < \infty$ for all $i \in \NN$.
\end{lemma}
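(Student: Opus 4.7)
The plan is essentially a one-line application of the necessary condition of Raman and Raman, restated above as \cref{lem:nonunif_nec}. That lemma asserts that if $\SC$ is non-uniformly noise-dependently generatable, then for \emph{every} noise level $i \ge 1$ there exists a countable nested sequence $\SC_0 \subseteq \SC_1 \subseteq \dots$ with $\SC = \bigcup_{j \in \NN} \SC_j$ and $\NC_i(\SC_j) < \infty$ for every $j \in \NN$. The conclusion of \cref{lem:nonunif_nec_char} is exactly this statement specialized to $i = 1$, so the proof is immediate.

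In more detail, given that $\SC$ is non-uniformly noise-dependently generatable, I would simply invoke \cref{lem:nonunif_nec} with the parameter $i$ set to $1$, obtaining a countable sequence $\SC_0 \subseteq \SC_1 \subseteq \dots$ with $\SC = \bigcup_{i \in \NN} \SC_i$ and $\NC_1(\SC_i) < \infty$ for every $i \in \NN$. This is precisely the desired conclusion, and no further argument is required.

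There is no substantive obstacle for this lemma in isolation. All of the real work already sits in \cref{thm:uniform_equiv}, which powers \cref{lem:nonunif_suff_char} and makes it possible for the necessary and sufficient conditions of \cite{RR25} to collapse to the same condition at noise level $1$. The present lemma merely harvests the $i = 1$ instance of Raman and Raman's necessary condition so that, when combined with \cref{lem:nonunif_suff_char}, one obtains a clean equivalence between non-uniform noise-dependent generatability and the existence of a nested decomposition with $\NC_1$ finite on each piece — which by \cref{lem:nonuniform_char_fin} is itself the characterization of non-uniform generation with noise level $1$, yielding \cref{thm:nonuniform_equiv} in the strong form stated by the main theorem.
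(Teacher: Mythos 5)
Your proposal is correct and matches the paper's own proof exactly: both simply specialize \cref{lem:nonunif_nec} to noise level $i = 1$, which immediately yields the stated decomposition. No further argument is needed.
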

\begin{proof}
    Let $\SC$ be a collection that is non-uniformly noise-dependently generatable.
    By~\cref{lem:nonunif_nec}, for every noise level $i \ge 1$,
    there exists a countable sequence of collections $\SC_0 \subseteq \SC_1 \subseteq \dots$
    such that $\SC = \bigcup_{j \in \NN} \SC_j$ and $\NC_i(\SC_j) < \infty$ for all $j \in \NN$.
    The desired statement is the case when $i = 1$.
\end{proof}

Combining~\Cref{lem:nonunif_suff_char,lem:nonunif_nec_char},
we have the following complete characterization for non-uniform noise-dependent generatability.

\begin{restatable}{theorem}{nonuniformcharinf}
    \label{thm:nonuniform_char_inf}
    A collection $\SC$ is non-uniformly noise-dependently generatable
    if and only if there exists a countable sequence of collections $\SC_0 \subseteq \SC_1 \subseteq \dots$
    such that $\SC = \bigcup_{i \in \NN} \SC_i$ and $\NC_1(\SC_i) < \infty$ for all $i \in \NN$.
\end{restatable}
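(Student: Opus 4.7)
The plan is to observe that this theorem is just the conjunction of \Cref{lem:nonunif_suff_char} and \Cref{lem:nonunif_nec_char}, which have already been established immediately above. So the proof itself is essentially a one-liner, but it is worth explaining why these two lemmas are actually sufficient, since the analogous characterization in \cite{RR25} was left open.

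For the \emph{if} direction, I would invoke \Cref{lem:nonunif_suff_char} directly: given a sequence $\SC_0 \subseteq \SC_1 \subseteq \dots$ with $\SC = \bigcup_i \SC_i$ and $\NC_1(\SC_i) < \infty$, that lemma already yields an algorithm which non-uniformly noise-dependently generates $\SC$. The engine under the hood is \Cref{thm:uniform_equiv} combined with \Cref{lem:uniform_char_fin}: having $\NC_1(\SC_i) < \infty$ automatically upgrades to $\NC_j(\SC_i) < \infty$ for all $j \ge 1$, so the sufficient condition from \cite{RR25} (namely $\NC_i(\SC_i) < \infty$) is satisfied along the diagonal.

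For the \emph{only if} direction, I would invoke \Cref{lem:nonunif_nec_char}, which in turn is an immediate specialization of the $i=1$ case of Raman and Raman's necessary condition (\Cref{lem:nonunif_nec}). The point to emphasize is that although \cite{RR25} only gave a family of necessary conditions (one for each $i$) and a weaker sufficient condition, our \Cref{thm:uniform_equiv} collapses all the noisy closure conditions into the single case $i=1$, which is why the two directions can now meet. I do not anticipate any technical obstacle here --- the heavy lifting has already been done by \Cref{lem:noisy_closure_inf} and its propagation through \Cref{thm:uniform_equiv}, \Cref{lem:uniform_char_fin}, and \Cref{lem:nonuniform_char_fin}.

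Putting these two directions together gives the biconditional, completing the proof. The main conceptual remark I would include is that this resolves the open question of \cite{RR25} on a precise characterization of non-uniform noise-dependent generatability, and the characterization looks structurally identical to the noiseless non-uniform case of \cite{LRT25,CP25} but with the finite closure dimension condition replaced by its noise level $1$ analog.
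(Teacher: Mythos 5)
Your proposal is correct and matches the paper's own argument exactly: the theorem is obtained by combining \Cref{lem:nonunif_suff_char} (sufficiency) with \Cref{lem:nonunif_nec_char} (necessity), with the collapse to noise level $1$ already handled by \Cref{thm:uniform_equiv} and \Cref{lem:uniform_char_fin}. No gaps; your added commentary on why the two directions now meet is accurate but not needed for the proof itself.
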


This gives the first known characterization for non-uniform noise-dependent generatability,
answering an open question from~\cite{RR25}.

We now turn to separating non-uniform generation with and without noise.
Unlike uniform noisy generation, it is an open question whether non-uniform generation
is equivalent to non-uniform noise-dependent generation.
We resolve this question by showing that there is a collection that is uniformly generatable without noise,
but is not non-uniformly generatable with noise level $1$.

\nonuniformdiff*
\begin{proof}
    Let the universe $U$ be $\NN \times \NN$ and let $B_c = \{(c, i) \mid i \in \NN\}$ for each $c \in \NN$.
    For each nonempty subset $T \subseteq \NN$, define the language $L_{T} = \bigcup_{c \in T} B_c$.
    Finally, let $\SC = \{L_T \mid T \subseteq \NN, T \neq \emptyset\}$ be the collection of all such languages.
    Intuitively, we can imagine each of the sets $B_c$ as being disjoint infinite ``columns''.
    Then, our collection $\SC$ consists exactly of all nonempty unions of those columns.
    For any pair $x = (a, b) \in \NN \times \NN$, we write $f(x) \coloneq a$ to denote
    the first coordinate of the pair $x$.

    We first show that $\SC$ is uniformly generatable without noise.
    Let $c = f(x_0)$ be the column of the first adversary string.
    Note that this implies $B_c$ is contained in the target language $K$.
    At each time $t$, the algorithm outputs an arbitrary string from $B_c \setminus S_t$.
    Since we know that $B_c \subseteq K$, every string output by the algorithm will be correct.

    We now claim that $\SC$ is not non-uniformly generatable with noise level $1$.
    First define an infinite sequence of sets
    \[s_1 = \{0\}, s_2 = \{1, 2\}, s_3 = \{3, 4, 5\}, \dots ,\]
    so that each $s_i$ is a set of $i$ elements which is disjoint from every other $s_j$.
    Correspondingly, define the following infinite sequence of ordered lists of elements of $U$:
    \begin{align*}
        s'_1 = ((0, 0)), s'_2 = ((1, 0), (2, 0)), s'_3 = ((3, 0), (4, 0), (5, 0)), \dots .
    \end{align*}
    We can think of each $s_i$ as being a set of column indices, while the corresponding
    $s'_i$ is an ordered list containing the first element of each column in $s_i$.

    Assume for contradiction that there is some algorithm $G$
    which generates non-uniformly with noise level $1$ for $\SC$.
    There must either be an infinite number of indices $i$ such that $f(G(s'_i)) \in s_i$,
    or there must be an infinite number of indices $i$ such that $f(G(s'_i)) \notin s_i$.
    We first analyze the case where there are an infinite number of indices $i$ such that $f(G(s'_i)) \in s_i$.
    Let $X = \{i \mid f(G(s'_i)) \in s_i\}$.
    Now consider the language
    \[L = \bigcup_{i \in X} \left( \bigcup_{c \in s_i \setminus \{f(G(s'_i))\}} B_c \right).\]
    Intuitively, $L$ is the union of columns in $(s_i \setminus \{f(G(s'_i))\})$
    for each $s_i$ where $f(G(s'_i)) \in s_i$.
    Note that for each $i \in X$, the list $s'_i$ is a valid prefix of an enumeration of $L$ with noise level $1$.
    Furthermore, we constructed $L$ so that the column $f(G(s'_i))$ is not in $L$,
    implying that the output $G(s'_i)$ is incorrect for each such $i$.
    Since $X$ is infinite, there are arbitrarily long lists $s'_i$ where $G(s'_i)$ is incorrect.
    Thus, there is no time step where $G$ generates correctly for $L$,
    so $G$ does not generate non-uniformly with noise level $1$.

    We now analyze the other case where there are infinitely many indices
    $i_0 < i_1 < \dotsb$ such that $f(G(s'_{i_j})) \notin s_{i_j}$ for all $j \in \NN$.
    For convenience, let $a_j = s_{i_j}$ and $a'_j = s'_{i_j}$.
    Suppose first there is some set $a_i$ where there are an infinite number of lists $a'_j$
    such that $f(G(a'_j)) \in a_i$.
    In this case, let $Y = \{j \mid f(G(a'_j)) \in a_i\} \setminus \{i\}$ and consider the language
    \[L = \bigcup_{j \in Y} \left( \bigcup_{c \in a_j} B_c \right).\]
    Since the columns of $a_i$ are not contained in $L$, and $a_i$ is disjoint from each $a_j$,
    it must be that $G(a'_j) \notin L$ for each $j \in Y$.
    As before, $Y$ is infinite, so $G$ does not generate non-uniformly with noise level $1$.

    We are now left with the case where for each set $a_i$,
    there are only a finite number of lists $a'_j$ such that $f(G(a'_j)) \in a_i$.
    Informally, we iteratively construct a language $L$ by going through the sets $a_i$ in order,
    and adding a set of columns $a_i$ to $L$ if $a_i$ does not contain any elements of $G(a'_j)$
    for any previously added set $a_j$, and $G(a'_i)$ is not already in $L$.
    \begin{algorithm}
        \caption{Constructing a language $L$}\label{alg:nonunif_cons}
        \begin{algorithmic}[1]
            \STATE $L = \emptyset$
            \STATE $C = \emptyset$
            \STATE $N = \emptyset$
            \FOR{$i \in 0, 1, \dotsc$} \label{line:nonunif_for}
            \IF {$a_i \cap N = \emptyset$ $\mathrm{and}$ $G(a'_i) \notin L$} \label{line:nonunif_if}
            \STATE $L = L \cup \left( \bigcup_{c \in a_i} B_c \right)$
            \STATE $C = C \cup \{i\}$
            \STATE $N = N \cup \{f(G(a'_i))\}$
            \ENDIF
            \ENDFOR
        \end{algorithmic}
    \end{algorithm}

    Consider the sets $L$, $C$, and $N$ constructed by~\cref{alg:nonunif_cons}.
    Interpreting $C$ as a set of column indices,
    it is easy to see that $L$ is the union of the columns in $C$.
    We first argue that $C$ must be infinite.
    We wish to show that at every iteration $i$ of the for loop on \cref{line:nonunif_for},
    there is some $j \ge i$ such that $a_j \cap N = \emptyset$ and $G(a'_j) \notin L$.
    Let $i$ be an arbitrary iteration, and let $L_i$, $C_i$, and $N_i$
    be the current values of the sets at this iteration.
    First note that $N_i$ must be finite because the size of $N$
    increases by at most $1$ in each iteration.
    Thus, there are only a finite number of indices $j \ge i$ where $a_j \cap N_i \neq \emptyset$.
    We also have that for each set $a_i$,
    there are only a finite number of lists $a'_j$ such that $f(G(a'_j)) \in a_i$.
    Since the set $L_i$ corresponds to a finite number of columns,
    there must only be a finite number of lists $a'_j$ such that $G(a'_j) \in L$.
    Thus in total, there are only a finite number of indices $j \ge i$
    where $a_j \cap N_i \neq \emptyset$ or $G(a'_j) \in L$,
    implying that there exists some $j \ge i$ where
    the if condition on \cref{line:nonunif_if} is true.

    We have shown that at each iteration $i$ on \cref{line:nonunif_for},
    there exists a future iteration $j$ where the size of $C$ increases.
    Thus $C$ must be an infinite set.
    We now claim that for each index $i \in C$, we have $G(a'_i) \notin L$.
    By the condition on \cref{line:nonunif_if},
    we must have had $G(a'_i) \notin L$ at iteration $i$ when $i$ was added to $C$.
    Then at any future iteration $j \ge i$, we have $f(G(a'_i)) \in N$.
    If the columns indexed by $a_j$ are later added to $L$, it must be that $a_j \cap N = \emptyset$,
    implying that $f(G(a'_i)) \notin a_j$.
    This ensures $G(a'_i) \notin L$ as desired.
    Finally, note that $a'_i$ is a valid prefix of an enumeration of $L$ for each $i \in C$.
    Since $C$ is infinite, there are arbitrarily large lists $a'_i$ such that $G(a'_i) \notin L$,
    implying $G$ does not generate non-uniformly with noise level $1$.
\end{proof}

In summary, we have the following complete picture for noisy non-uniform generation.

\nonuniformfull*
\begin{proof}
    The equivalence of~\cref{list:nonuniform1} and~\cref{list:nonuniform2} is
    given by~\cref{thm:nonuniform_equiv} and~\cref{lem:nonuniform_char_fin}.
    The equivalence of~\cref{list:nonuniform1} and~\cref{list:nonuniform3} is
    given by~\cref{thm:nonuniform_char_inf}.
    The last part of the statement is given by~\cref{thm:nonuniform_diff}.
\end{proof}

    \section{Closing Remarks}\label{sec:closing}Language generation in the limit is an exciting framework that captures the
fundamental problem of generating new examples based on training data.
In this paper, we further explore and quantify the fine-grained influence of noise within this model.
We show a sharp separation between noiseless and noisy generation, and also
show a surprising equivalence between many models of noisy generation,
resolving several open questions from~\cite{RR25}.
We hope that the results and techniques in our work will spur future exploration in this area.

    \section*{Acknowledgements}
    We would like to thank Debmalya Panigrahi for many helpful discussions
    and suggestions in the writing of this paper.

        {\small
    \bibliographystyle{alphaurl}
    \bibliography{ian}
    }

\end{document}